\documentclass[aps,pra,showpacs,twoside,10pt,floatfix,nofootinbib,longbibliography,twocolumn]{revtex4-2}
\usepackage{times,epsfig,amssymb,amsfonts,amsmath,bm,subfigure,mathtools,amsthm,braket,soul,enumitem,xcolor, physics, graphics,graphicx}
\usepackage[normalem]{ulem}
\usepackage{comment, appendix}
\usepackage{epstopdf}
\newtheorem{theorem}{Theorem}

\usepackage{comment}
\usepackage[colorlinks=true, citecolor=red, urlcolor=blue ]{hyperref}

\begin{document}

\title{
Dynamically frozen long-distance entanglement via non-Hermitian \(\hat{\mathcal{P}}\hat{\mathcal{T}}\)-symmetric systems}

\author{Sejal Ahuja$^{1,2}$, Keshav Das Agarwal$^{1,2}$,  Aditi Sen(De)$^{1,2}$}
\affiliation{$^1$Harish-Chandra Research Institute,  Chhatnag Road, Jhunsi, Prayagraj - 211019, India}
\affiliation{$^2$Homi Bhabha National Institute,  Training School Complex, Anushakti Nagar, Mumbai 400 094, India}

\begin{abstract}

In distributed quantum networks, interacting spin systems can mediate the generation of highly entangled links between distant nodes. We investigate the role of effective parity-time ($\hat{\mathcal{P}}\hat{\mathcal{T}}$)-symmetric non-Hermitian spin-$1/2$ bulks weakly coupled to two quantum links, obtained due to the environmental interactions affecting both the bulk and the links. Focusing on effective non-Hermitian nearest-neighbor (NN) Su-Schrieffer-Heeger (SSH) models, we analyze how non-Hermiticity influences the dynamical formation of long-distance entanglement (LDE). For a paradigmatic model consisting of a quantum $XX$ bulk subjected to imaginary staggered magnetic fields, we analytically determine the exceptional points arising from the resulting bulk-mediated interactions between the links. Combining analytical and numerical methods, we demonstrate that an initially fully separable state can dynamically evolve into highly entangled link states near these exceptional points in the broken regime. Further, after optimizing over time and system parameters, near-unit time-averaged entanglement between the links emerges under weak imaginary magnetic fields and bulk-link couplings, which  cannot be attained in the corresponding Hermitian systems. Moreover, the non-Hermitian dynamics exhibit a {\it freezing} of high entanglement in the vicinity of exceptional points, a feature absent in Hermitian counterparts. We also identify regimes of long-range interaction strengths that yield a higher time-averaged entanglement than the corresponding NN models. Furthermore, we establish that LDE persists in the stationary regime, highlighting the promise of engineered non-Hermitian dynamics for realizing robust and frozen entangled links in quantum networks.

\end{abstract}

\maketitle

\section{Introduction}

Numerous counterintuitive and exotic phenomena have emerged with the development of quantum physics, among which quantum entanglement occupies a central position \cite{Schrodinger_1935, Einstein_1935, Bell_1964, Aspect_1982, HoroRMP, nielsenbook}. Beyond its foundational significance, it has become one of the key resources driving the second quantum revolution, enabling the realization of a wide range of quantum technologies~\cite{benet_tele,nielsenbook, HoroRMP, Chitambar_rmp}, including quantum communication devices and computers~\cite{Bennett2000, *Gisin2007, *Northup2014}, and quantum sensors~\cite{degan_review,QmetroRMP}. In particular, entanglement shared between spatially separated quantum systems is required to achieve scalable quantum networks~\cite{Cirac_PRA_1999, *repeater_PRL_1998, *Repeater_PRA_1999, *Duan2001, *Kimble2008, *Halder_2022, *Azuma_2023, Meter2011}. Such long-distance entangled links constitute the building blocks for connecting distant quantum processors, thereby providing a framework for modular architectures~\cite{Kielpinski2002,*Blinov2004,*Skinner_PRL_2003} and distributed quantum computation~\cite{Meter2011,Caleffi_2024}. In this context, one-dimensional quantum spin systems have attracted significant attention as promising platforms for the generation and preservation of long-distance entanglement (LDE)~\cite{Bose, Bose_2003, Wojcik05, venuti_prl, VenutiPRA07, Ferreira_PRA_2008, Giampaolo_2010, Dhar_2016, Trifunovic_2013, Abaach2023, Soares_2025, agarwal2026_b} between distant spins interacting weakly with the bulk (due to monogamy of entanglement~\cite{wooters_monogamy_pra_2000}) (see Fig.~\ref{fig:schematics} for schematic representation of the set-up). This architecture can create highly entangled quantum links mediated by the bulk, through both equilibrium ~\cite{VenutiPRA07, venuti_prl, Ferreira_PRA_2008, Giampaolo_2010, Dhar_2016, agarwal2026_b} and dynamical scenarios~\cite{Bose, Bose_2003, Wojcik05, Trifunovic_2013, Abaach2023, Soares_2025, agarwal2026_b}.

\begin{figure}
    \centering
\includegraphics[width=\linewidth]{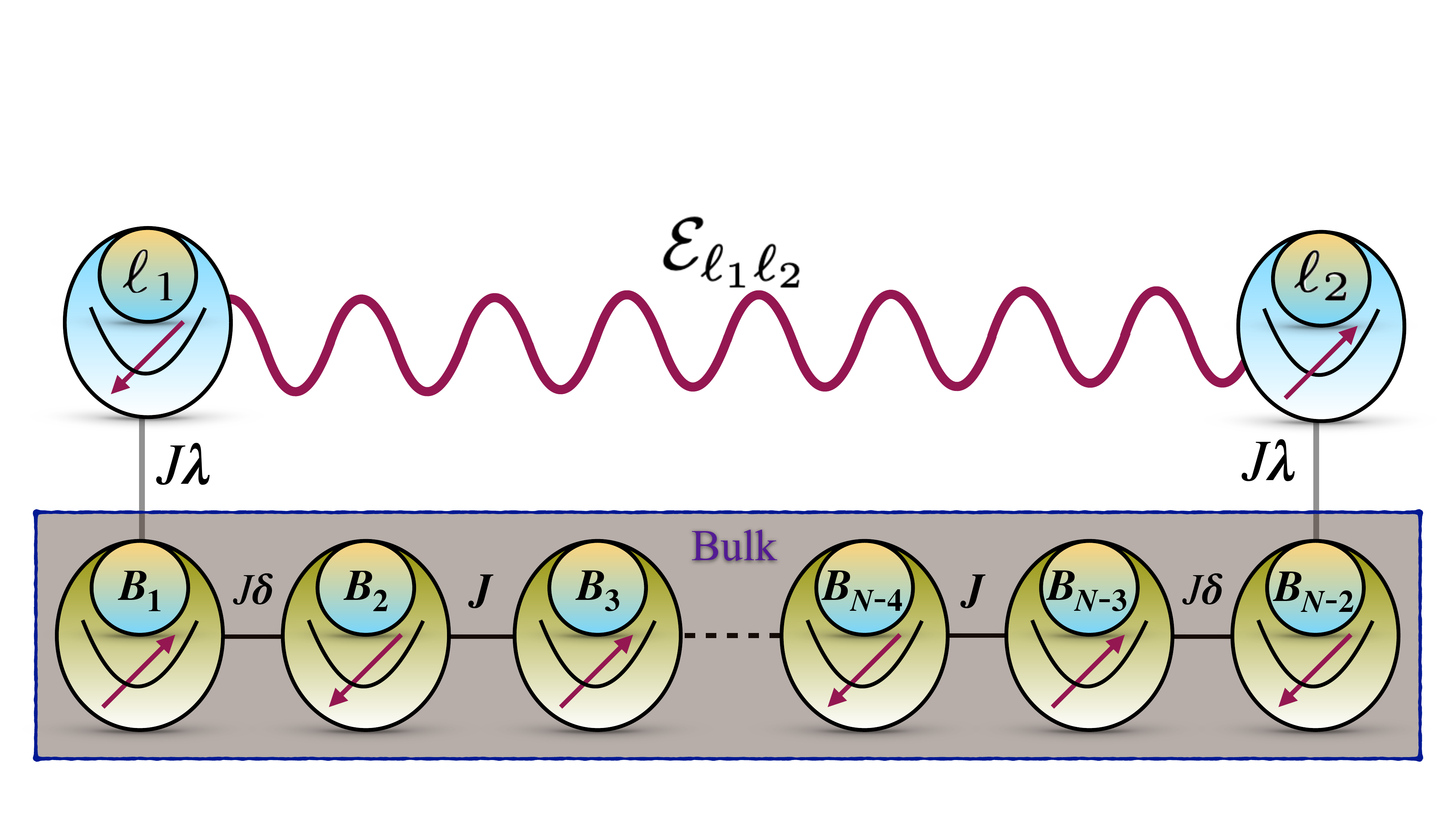}
    \caption{{\bf Schematic of the LDE generation protocol when all the sites are influenced by the environment.} The spin chain is composed of two components: bulk \((B_1, B_2, \cdots, B_{N-2})\) interacting weakly with the two link spins \((\ell_1,\ell_N)\) situated at the ends of the bulk. In our case, the interactions in bulk spins are described by the SSH model (see Eq. (\ref{eq:ssh_Ham_def})), and the weak interactions between link-bulk spin pairs \((\ell_1,B_1)\) and \((B_{N-2},\ell_2)\) are represented by \(\lambda\ll1\).  Each spin is attached to a local bath, where continuous measurements are performed in opposite directions to obtain an effective Hamiltonian of an imaginary alternating magnetic field. The protocol aims to create near-maximal long-distance entanglement between the links \(\mathcal{E}_{\ell_1 \ell_2}\sim 1\).}
    \label{fig:schematics}
\end{figure}


One of the current challenges confronting the advancement of quantum technologies is the unavoidable interactions with the surrounding environment, leading to the degradation of quantum resources, especially entanglement, over time. Understanding how to mitigate these detrimental effects, engineer robust quantum devices, or identify conditions in which quantum protocols retain their efficiency despite environmental influences is therefore of paramount importance. In recent years, non-Hermitian systems have emerged as an effective theoretical framework for describing and exploiting the effects of system-reservoir coupling. In particular, open quantum systems governed by the Gorini–Kossakowski–Lindblad–Sudarshan (GKLS) master equation~\cite{open_quan_book, lidar_2020_lecture} can,  through continuous monitoring and post-selected dynamics, be mapped to an effective closed system description generated by a non-Hermitian Hamiltonian~\cite{Eleuch2015, Minganti2020, turkeshi_prb_2021, turkeshi_prb_2023b}. The resulting dynamics are non-unitary and are governed by an effective Hamiltonian that incorporates the dissipation and measurements ~\cite{Ashida2020}. Such non-Hermitian systems can possess anti-unitary symmetries like the parity-time- (\(\hat{\mathcal{P}}\hat{\mathcal{T}}\)-)~\cite{bender_spectra, mosta_ali_2002, Mostafazadeh_2004} and rotation-time-symmetry~\cite{Song_RT_symm, gan_adi_factor, kda2024, Agarwal2026}, which gives rise to several intriguing phenomena.  Prominent ones include the emergence of unbroken and broken phases separated by exceptional points~\cite{bender_spectra, Minganti2020, prl_exp_EPnonmarkov}, non-trivial entanglement properties~\cite{turkeshi_prb_2023b, Kawabata2023, Agarwal2026}, the non-Hermitian skin effect~\cite{Li2020Oct, xiao_prb_2022} and exceptional topology~\cite{fleckenstein_prr_2022, *Bergholtz_rmp_2021, *Kawabata2019}, all of which are distinctive characteristics of non-Hermitian systems with no direct Hermitian analogue. 
Recent experimental realizations across photonic, atomic, and other engineered platforms~\cite{guo_pt_exp,*chitsazi_prl_2017,*naghiloo_nature_ep_sensing_2019,*ep_sensing_exp_2} have demonstrated the practical relevance of non-Hermitian phenomena and their potential advantages for emerging quantum technologies
such as quantum sensing~\cite{ep_sensing_3, *ep_sensing_2, *ep_sensing_1, *Budich_2020, *Xiao_2024, *Parto2025, *agarwal2025_sensing, *agarwal2026_sensing, *Wiersig2026} and quantum thermal machines~\cite{khandelwal_prx_2021, *konar2022quantum, *Santos2023Dec}, thereby opening new avenues for environmental interactions as resources.

Despite significant progress in understanding entanglement generation in isolated systems, the influence of environmental interactions on quantum networks, particularly when both the processors and the connecting links are coupled to external baths, remains largely unexplored. In this work, we aim to bridge this gap by asking the question: "Can long-distance entanglement be dynamically created between distant links when the underlying effective dynamics is governed by a non-Hermitian Hamiltonian?'' We answer this question affirmatively by demonstrating that near-maximal and remarkably stable entanglement can emerge dynamically from an initially fully separable state.

To establish these results, we consider a paradigmatic parity-time \(\hat{\mathcal{P}}\hat{\mathcal{T}}\)-symmetric Su-Schrieffer-Heeger (SSH) model~\cite{ssh_og, Asboth2016, Lieu2018, Halder_2023, Han2021}, which has been extensively studied in the contexts of topological properties~\cite{Lieu2018, Halder_2023, Han2021}, half-block-entanglement dynamics ~\cite{shiro_scip_25, xhek_scip_2023, adhip_PRB22, noclck_xhek, fazio_prb25, nhent_prb} and entanglement revival via non-Markovian dynamics \cite{soo_pra25}. Such a non-Hermitian class of models has also been experimentally realized in platforms, such as superconducting circuits \cite{Mei2018_pra} and photonic waveguides \cite{Luo2024_aqt}, making them potential candidates for investigating LDE production. Motivated by recent investigations of entangled-link generation in both closed~\cite{Lang2017, zhang_25} and open~\cite{jaynes_nh_aop,hamid_ep,prl_25_condent, Soares_2025,ent_dyn_atom2026} systems, we explore the role of non-unitary evolution in facilitating robust LDE creation (see \cite{pra_feedback} for feedback with bosonic bulk and \cite{ent_ep1,nori_prr_entEP} for experimental validation of enhanced entanglement near exceptional points). 

Towards achieving this objective, we analytically determine the exceptional points of an $XX$ spin chain with nearest-neighbor (NN) interactions subjected to an imaginary staggered magnetic field in the weak coupling regime between the end links and the bulk. We show that, under the evolution through the \(\hat{\mathcal{P}}\hat{\mathcal{T}}\)-symmetric SSH model, entanglement between distant links exhibits a striking freezing of entanglement near the exceptional point in the broken phase, maintaining almost maximal values over extended periods of time.  In the (XX)-limit of the SSH model, the generation of maximally entangled links can be established analytically, while for other classes of the SSH models, numerical analyses reveal that the time-averaged entanglement is almost unity in the proximity of exceptional points. Furthermore, by comparing the long-distance entanglement generated via non-Hermitian and corresponding Hermitian evolutions, we find that the non-Hermitian dynamics can produce higher entanglement after optimizing over the system parameters and evolution times. Remarkably, this enhancement is robust against variations in the initial state. Our analysis identifies the non-Hermitian ($XX$) model as the optimal setting for entangled-link generation within the considered family of models. Extending beyond NN interactions, we demonstrate that appropriately tuned long-range couplings can further enhance the generated average optimized long-distance entanglement compared with their nearest-neighbor counterparts.

We further extend our analysis beyond the non-equilibrium regime to the stationary states governing the long-time dynamics (\(t \rightarrow \infty\)) of the system. We find that these non-Hermitian stationary states can support substantial LDE between links weakly coupled to the bulk. This achievable LDE exceeds that present in the ground state of the corresponding Hermitian counterpart, highlighting the constructive role of non-Hermiticity in generating robust quantum correlations.


The paper is organized as follows. Sec. \ref{sec:framework} introduces the protocol of long-distance entanglement generation, the non-Hermitian \(\hat{\mathcal{P}}\hat{\mathcal{T}}\)-symmetric interactive model and the figure of merits for our work. We analytically demonstrate the expression for the exceptional points for the non-Hermitian SSH model in Sec. \ref{sec:ep_nH}. In Sec. \ref{sec:nH_dyn}, we present the benefit of non-Hermiticity in LDE creation, and its long-time freezing in this setting, whereas the study in the stationary case is presented in Sec. \ref{sec:statics}. Sec. \ref{sec:conclu} summarizes the results.

\section{Framework for generating entangled links}
\label{sec:framework}

In order to create long-distance entanglement between distant parties (links), interactions between them are necessary. These interactions can be either direct or mediated via the channel (bulk) connecting them~\cite{venuti_prl, VenutiPRA07, Ferreira_PRA_2008, Giampaolo_2010, Dhar_2016, Trifunovic_2013, Abaach2023, Soares_2025, agarwal2026_b} which we will adopt in this work. Let us consider a chain of \(N\) spin-\(1/2\) parties, with quantum links (\(\ell_1\) and \(\ell_2\)) being the spins at the ends of the chain, i.e., at sites $(\ell_1, \ell_2)\equiv(1, N)$ and the bulk (intermediate) sites being represented by \(B_k\equiv k-1\) for $k=2$ to $N\!-\!1$ as shown in Fig.~\ref{fig:schematics}. Further, a key distinction of our approach from previous protocols connecting sites \(1\) and \(N\) is that all the spins in the chain are coupled to the bath, giving rise to a novel class of non-Hermitian models.

We first present the protocol for the dynamical generation of highly entangled quantum links via $U(1)$-symmetric Hamiltonian. In particular, we investigate the \(\hat{\mathcal{P}}\hat{\mathcal{T}}\)-symmetric SSH model~\cite{ssh_og, Asboth2016, Lieu2018, Halder_2023, Han2021} for building entangled links $(\ell_1, \ell_2)$ providing LDE. The bulk with open boundaries is represented by the Hamiltonian \(\hat{H}_B\) and the end spins of the bulk are coupled to the links via $\hat{V}_{\ell}$ for even $N$. Therefore, the Hamiltonian $\hat{\mathcal{H}}_{B\ell}$ of the system is given by
\begin{align}
    \nonumber \hat{\mathcal{H}}_{B\ell} &= J\hat{H}_B + \lambda^\prime \hat{V}_{\ell}, \quad \hat{H}_{B} = \! \delta \sum_{\mathclap{\substack{k=2\\k\in  \text{even}}}}^{\mathclap{N-2}}\hat{\mathsf{H}}_{k, k+1} + \sum_{\mathclap{\substack{k=3\\k\in  \text{odd}}}}^{\mathclap{N-3}}\hat{\mathsf{H}}_{k, k+1}, \\  
    \hat{V}_{\ell} &= \hat{\mathsf{H}}_{1, 2} + \hat{\mathsf{H}}_{N-1, N}, \quad \hat{\mathsf{H}}_{a, b} = \hat{\sigma}^x_{a}\hat{\sigma}^x_{b} \!+\! \hat{\sigma}^y_{a}\hat{\sigma}^y_{b} ,
    \label{eq:ssh_Ham_def}
\end{align}
where \(\hat{\sigma}_k^p\) ($p\!=\!x,y,z$) are the Pauli matrices at site \(k\) with $\hat{\sigma}_k^z = \ket{\uparrow}\!\bra{\uparrow}_k \!-\! \ket{\downarrow}\!\bra{\downarrow}_k$, \(\delta\) denotes the intracell couplings relative to the intercell couplings and \(\lambda\equiv\lambda^\prime/J\) corresponds to the bulk-link coupling strength at the ends of the spin chain. The $XX$ model is given by the uniform interactions in the bulk, i.e., \(\delta \!=\! 1\), which has been studied for the LDE present in the ground state~\cite{VenutiPRA07, Giampaolo_2010, Dhar_2016},  as well as for quantum teleportation~\cite{venuti_prl_2007, VenutiPRA07}.

Let us now assume that all the \(N\) sites interact with environmental qubits (as shown schematically in Fig. \ref{fig:schematics}) and hence the joint system-environment duo undergoes the evolution together. After continuously measuring the auxiliary qubit on the eigenbasis of \(\hat{\sigma}^z\) and post selecting the outcomes, the effective dynamics of the system is governed by the non-Hermitian Hamiltonian having imaginary alternative magnetic field at every site (see Appendix~\ref{app:nH_deriv} for details)~\cite{Minganti2020, turkeshi_prb_2021, turkeshi_prb_2023b}. In particular, in our setup described in Eq. (\ref{eq:ssh_Ham_def}), the \(\hat{\mathcal{P}}\hat{\mathcal{T}}\)-symmetric non-Hermitian SSH model in the presence staggered (alternating) imaginary magnetic field described by \(\hat{W}\), and its Hermitian counterpart with real magnetic field, are given by 
\begin{align}
    \label{eq:nhH_def}
    &\hat{\mathcal{H}}_{\text{nH}} = \hat{\mathcal{H}}_{B\ell} + ih^\prime\hat{W}, \quad \hat{\mathcal{H}}_{\text{H}} = \hat{\mathcal{H}}_{B\ell} + h^\prime\hat{W}, \\
    \nonumber &\hat{W} = \sum_{k=1}^N (-1)^{k+1} \hat{\sigma}_k^z ,
\end{align}
where $h\equiv h^\prime/J$ is the strength of the magnetic field. We show the advantage of LDE generation via $\hat{\mathcal{H}}_{\text{nH}}$ over its Hermitian counterpart $\hat{\mathcal{H}}_{\text{H}}$, given by $h\leftrightarrow ih$. While the correspondence between the topological properties between the Hermitian and non-Hermitian systems has been explored~\cite{Lee_2019, *Schindler_2023, *Hamanaka_2024}, the $h\leftrightarrow ih$ correspondence has been recently shown for various properties, namely between the factorization surface of the Hermitian model with the exceptional points of the corresponding non-Hermitian system~\cite{gan_adi_factor, Agarwal2026, ahuja_nhqst},  as well as between system parameters with unsuccessful quantum state transfer~\cite{ahuja_nhqst}. 

The SSH model possess the $U(1)$-symmetry given by the operator $\hat{n}^{(\uparrow)}\!\!=\!\!\sum_{k=1}^N\ket{\uparrow}\!\bra{\uparrow}_k$ with $[\hat{n}^{(\uparrow)}, \hat{\mathcal{H}}_{\text{X}=\text{nH}, \text{H}}] \!=\! 0$, and its eigenspace with eigenvalues $n^{(\uparrow)} \!=\! 0,1,\dots N$ denotes the $N\!+\!1$ different sectors in the computational basis $\{\ket{\uparrow}, \ket{\downarrow}\}^{\otimes N}$. 
In our work, we consider the dynamics in the $n^{(\uparrow)} \!=\! 1$ sector, which gives the $N$-dimensional Hilbert space, with non-trivial dynamics, arising from $\{\ket{\underline{k}}=\hat{\sigma}^x_k \ket{\downarrow}^{\otimes N}\}_{k=1}^{N}$ basis, where $\ket{\underline{k}}$ denotes the state, with $\ket{\uparrow}$ being only at $k$-th site. 
Therefore, the interactions $\hat{\mathsf{H}}_{a, b}$ can be represented as $\hat{\mathsf{H}}_{a,b} \!=\! 2\left(\ket{\underline{a}}\!\bra{\underline{b}} \!+\! \ket{\underline{b}}\!\bra{\underline{a}}\right)$ in the $n^{(\uparrow)} \!=\! 1$ sector basis and describes the hopping of non-interacting particles in a dimerized chain~\cite{ssh_og, Asboth2016}, while $\hat{W} \!=\! 2\sum_{k=1}^{N} (-1)^{k+1}\ket{\underline{k}}\!\bra{\underline{k}}$ is diagonal and provides a staggered potential at each site.

In addition to the \(U(1)\) symmetry, the SSH model also possesses \(\hat{\mathcal{P}}\hat{\mathcal{T}}\)-symmetry~\cite{bender_spectra, mosta_ali_2002, Mostafazadeh_2004}, with \(\hat{\mathcal{P}}\) being the parity operator and \(\hat{\mathcal{T}}\) the time reversal operator. The parity symmetry corresponds to $\hat{\mathcal{P}}\hat{A}_k \hat{\mathcal{P}} \!=\! \hat{A}_{N-k+1}$ with an arbitrary local operator \(\hat{A}_k\) at the site $k$. The time reversal symmetry is given by
$\hat{\mathcal{T}}\hat{A}\hat{\mathcal{T}} \!=\! \hat{A}^{*}$ for an arbitrary \(\hat{A}\). While the Hamiltonian $\hat{\mathcal{H}}_{\text{H}}$ preserves both the \(\hat{\mathcal{P}}\) and \(\hat{\mathcal{T}}\) symmetries individually, the non-Hermitian model $\hat{\mathcal{H}}_{\text{nH}}$ breaks them and preserves the overall \(\hat{\mathcal{P}}\hat{\mathcal{T}}\)-symmetry. When the eigenspectrum of the system $\hat{\mathcal{H}}_{\text{nH}}$ preserves the \(\hat{\mathcal{P}}\hat{\mathcal{T}}\)-symmetry, the eigenvalues are real and the non-Hermitian Hamiltonian can be mapped to a Hermitian Hamiltonian via a similarity transformation~\cite{mosta_ali_2002, Mostafazadeh_2004}, hence also called pseudo-Hermitian Hamiltonian, and the system lies in the unbroken phase. In the broken phase, eigenspectrum can break the \(\hat{\mathcal{P}}\hat{\mathcal{T}}\)-symmetry and the eigenvalues become imaginary. The system possesses exceptional points where this transition from the unbroken to the broken regime occurs and eigenvectors coalesce as the Hamiltonian becomes defective.

In order to study the dynamics of the spin chain, the bulk is initialized in a product eigenstate of bulk system $\hat{H}_B$, specifically as $\ket{\psi}_B=\otimes_{k=2}^{N-1}\ket{\downarrow}_k$ while links are initialized as $\ket{\uparrow}_{\ell_1}$ and $\ket{\downarrow}_{\ell_2}$. Therefore, the initial state is a fully separable state 
\begin{equation}
    \ket{\Psi(0)} =\ket{\uparrow}_{\ell_1} \otimes \ket{\downarrow}_{B,\ell_2}^{\otimes N-1} = \ket{\underline{1}},
\end{equation}
leading to the non-trivial dynamics in $N$-dimensional Hilbert space of the $n^{(\uparrow)}$ sector in the $U(1)$-symmetric model both for Hermitian and non-Hermitian case. The initial state is then evolved via the \(\hat{\mathcal{P}}\hat{\mathcal{T}}\)-symmetric Hamiltonian \(\hat{\mathcal{H}}\) as \(|\Psi(t)\rangle = \mathcal{N}^{-1}e^{-i\hat{\mathcal{H}}t}|\Psi(0)\rangle\), where $\mathcal{N}=\sqrt{||\langle\Psi(t)|\Psi(t)\rangle||}$ is the normalization constant and $\hat{\mathcal{H}}\equiv \hat{\mathcal{H}}_{\text{nH}} (\hat{\mathcal{H}}_{\text{H}})$ for non-Hermitian (Hermitian) models. Note that the linear increase in Hilbert space with the system-size enables the investigation of systems with larger system-sizes.

The main goal is to create a highly entangled state between the links \(\ell_1\) and \(\ell_2\). Hence, we compute logarithmic negativity~\cite{peres_prl_1996, horodecki_pla_1996, vidal_pra_2002, plenio_prl}, denoted as \(\mathcal{E}_{\ell_1\ell_2} (t)\) obtained from the dynamical state \(|\Psi(t)\rangle\) after tracing out the bulk, i.e., $\rho_{1,N}(t)\!=\!\text{Tr}_B(\ket{\Psi(t)}\!\bra{\Psi(t)})$ (see Appendix ~\ref{app:log_neg}). To remove the time dependence from entanglement creation, we define the time-averaged link entanglement as 
\begin{equation}
\mathcal{E}_{\ell_1\ell_2}^{\text{avg}} = \frac{2}{T}\int_{T/2}^T \mathcal{E}_{\ell_1\ell_2}(t) dt
\end{equation}
with evolution for a large time $T$. Note that in our study, we choose \(T=5000\). We show that the Hamiltonian $\hat{\mathcal{H}}_{\text{nH}}$ with small link-interaction strengths $\lambda$ can generate maximally entangled links $(\ell_1 ,\ell_2)$ with $\mathcal{E}_{\ell_1\ell_2}^{\text{avg}}\approx 1$, when the bulk interacts via SSH model. In unitary evolution, the entanglement $\mathcal{E}_{\ell_1 \ell_2}(t)$ oscillates with time and near maximal entanglement is obtained, only at specific times, resulting in low values of average link entanglement $\mathcal{E}_{\ell_1\ell_2}^{\text{avg}}$. In contrast, due to non-unitary evolution via non-Hermitian Hamiltonian, the entanglement between the links is non-fluctuating and stabilizes at high value. 
In the following sections, we aim to study the LDE generation with the incorporation of non-Hermiticity in the model (open system dynamics) and compare it with the corresponding Hermitian (closed system dynamics) counterpart. Before presenting the results on the entanglement-creation, we first determine the exceptional points for this model which will be useful to prove one of the main results in the context of establishing maximal entanglement between the links \(\ell_1\) and \(\ell_2\). 

{\it Note.} While the entanglement can be distributed at large distances via quantum state transfer (QST) and has been studied when the interaction between the sites is of the $XX$~\cite{Bose_2003,  Wojcik05, Yao_2011, *Zwick_2011, *apollaro_pra_2012, *Vinet_2012, *Linneweber_2012, *Qin_2013, *Coden_2021} and the SSH types~\cite{Mei_2018, *DAngelis_2020, *Zhang_2023, *Hao_2025, *Xu_2024, *Ghosh_2025, ahuja_nhqst}, the dynamical generation of LDE from product state is a different architecture from QST scheme. In QST, pre-existing entanglement is transferred to the distant sites, whereas, here the generation of entanglement at long-distance, from a completely product state via time evolution is the main objective. A comparison of the two protocols is discussed in Appendix~\ref{app:xx_ent} for the $XX$ model.

\section{Exceptional points with small imaginary alternating magnetic fields and link-interactions}
\label{sec:ep_nH}

The \(\hat{\mathcal{P}}\hat{\mathcal{T}}\)-symmetric non-Hermitian SSH model $\hat{\mathcal{H}}_{\text{nH}}$ possess exceptional points where the system becomes defective. The exceptional points for $\hat{\mathcal{H}}_{\text{nH}}$ with weak link-interaction strengths $\lambda$, i.e., $\lambda \ll 1$ can be computed perturbatively for $\delta\!=\!1$. When the bulk possesses uniform interaction strength, i.e., the $XX$ model, the eigenvalues $E_a$ ($a \!=\! 1,2,\dots N)$ of the bulk $\hat{H}_{XX}$ in the $n^{(\uparrow)} \!=\! 1$ sector are given as
\begin{align}
    &\hat{H}_{XX} \!=\! \sum_{a=2}^{N-1} \!E_{a}\ket{\tilde{a}}\!\!\bra{\tilde{a}};\quad E_{a} \!=\! 4\cos\theta_a, \quad \theta_a \!\equiv\! \frac{(a\!-\!1)\pi}{N\!-\!1}, \nonumber\\
    &\ket{\tilde{a}} = \sqrt{\frac{2}{N\!-\!1}}\sum_{b=2}^{N-1} 
    \sin\left[(b\!-\!1)\theta_a\right] |\underline{b}\rangle.
    \label{eq:xx_spec}
\end{align}
for the bulk states ($a \!=\! 2,3\dots, N\!-\!1$), and $|\tilde{1}\rangle \!=\! |\underline{1}\rangle, |\tilde{N}\rangle \!=\! \ket{\underline{N}}$ with $E_1 \!=\! E_N \!=\! 0$ forms the zero energy eigenspace of the links $P_0 \!=\! |\tilde{1}\rangle\!\langle\tilde{1}| \!+\! |\tilde{N}\rangle\!\langle\tilde{N}|$. Note that $E_a \!=\! -E_{-a}$ with $-a\equiv N\!+\!1\!-\!a$ and we work with even $N$, which ensures no zero energy eigenstate in the bulk. Therefore, the off-diagonal terms are given as
\begin{align}
    &\hat{V}_{\ell} =  \sum_{a=2}^{N-1} 2v_a \Big[ (-1)^{a-1} (|\tilde{1}\rangle\langle\tilde{a}| + |\tilde{a}\rangle\langle\tilde{1}|) + |\tilde{N}\rangle\langle\tilde{a}| + |\tilde{a}\rangle\langle\tilde{N}| \Big], \nonumber\\
    &\hat{W} = 2\left(|\tilde{1}\rangle\langle\tilde{1}| - |\tilde{N}\rangle\langle\tilde{N}|+\sum_{a=2}^{N-1}\ket{-\tilde{a}}\bra{\tilde{a}}\right), \nonumber\\
    &\hat{\mathcal{H}}_{\text{nH}} (\delta\!=\!1) = \hat{H}_{XX} \!+\! \lambda\hat{V}_{\ell} \!+\! ih\hat{W},
    \label{eq:vw}
\end{align}
where $v_a\!=\!\sqrt{\frac{2}{N-1}}\sin\theta_a$, and the staggered magnetic field transforms an eigenstate \(|\tilde{a}\rangle\) with eigenenergy \(E_a\) into \(|-\tilde{a}\rangle\) with energy \(E_{-a} = - E_a\).

\begin{theorem}
    The exceptional points of the weakly interacting links with the $XX$ interacting bulk with imaginary staggered magnetic field, i.e., for the \(\hat{\mathcal{P}}\hat{\mathcal{T}}\)-symmetric $\hat{\mathcal{H}}_{\text{nH}}$, are given by $h_e(\delta\!=\!1, N, \lambda)=\lambda^2$ for $\lambda\ll 1$ and for even $N$.
    \label{thm:xx_ep}
\end{theorem}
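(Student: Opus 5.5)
The plan is to reduce $\hat{\mathcal{H}}_{\text{nH}}(\delta\!=\!1)$ in Eq.~(\ref{eq:vw}) to an effective $2\times2$ non-Hermitian Hamiltonian acting on the zero-energy link subspace $P_0=|\tilde 1\rangle\langle\tilde 1|+|\tilde N\rangle\langle\tilde N|$. Its exceptional point can then be read off directly.

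\textbf{Setting up the perturbation.} Since $N$ is even, no bulk energy $E_a=4\cos\theta_a$ vanishes, so $P_0$ is separated from the bulk by a gap of order $1/N$. For $\lambda\ll1$ I will perform a second-order Schrieffer--Wolff (Brillouin--Wigner) elimination of the bulk. The aim is to show that the exceptional point appears at $h\sim\lambda^2$. In that regime the imaginary field acting inside the bulk, i.e.\ the terms $ih|\!-\!\tilde a\rangle\langle\tilde a|$, only contributes corrections of order $h\lambda^2\sim\lambda^4$. I will therefore treat $ih\hat W$ at first order, where it is simply its projection $ih P_0\hat WP_0=2ih(|\tilde1\rangle\langle\tilde1|-|\tilde N\rangle\langle\tilde N|)$. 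The coupling $\lambda\hat V_\ell$ is treated at second order.

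\textbf{Computing the effective Hamiltonian.} Using Eq.~(\ref{eq:vw}), the second-order terms are
\begin{equation}
(H_{\rm eff})_{ij}=-\lambda^2\sum_{a=2}^{N-1}\frac{\langle\tilde i|\hat V_\ell|\tilde a\rangle\langle\tilde a|\hat V_\ell|\tilde j\rangle}{E_a},\qquad i,j\in\{1,N\}.
\end{equation}
The diagonal entries are $-\lambda^2\sum_a 4v_a^2/E_a$. These vanish upon pairing $a\leftrightarrow -a$, because $v_{-a}=v_a$ (as $\sin(\pi-\theta)=\sin\theta$) while $E_{-a}=-E_a$. The off-diagonal entry is
\begin{equation}
g=-\lambda^2\sum_a 4v_a^2(-1)^{a-1}/E_a.
\end{equation}
For even $N$ one has $(-1)^{-a-1}=(-1)^{a}$, so the paired terms add rather than cancel. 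Hence
\begin{equation}
H_{\rm eff}=\begin{pmatrix}2ih & g\\ g & -2ih\end{pmatrix},
\end{equation}
with eigenvalues $\pm\sqrt{g^2-4h^2}$. This matrix is defective exactly at $h_e=|g|/2$: it is unbroken for $h<|g|/2$ and broken for $h>|g|/2$.

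\textbf{Evaluating the sum (main obstacle).} The main step is to show that $|g|=2\lambda^2$ for every even $N$, i.e.
\begin{equation}
\frac{2}{N-1}\sum_{a=2}^{N-1}(-1)^{a-1}\frac{\sin^2\theta_a}{\cos\theta_a}=\pm1 .
\end{equation}
I would avoid summing directly. Instead, recognize $g=(2\lambda)^2 G_{2,N-1}(0)$, where $G(z)=(z-\hat H_{XX})^{-1}$ is the bulk resolvent restricted to the open chain of $N-2$ sites with hopping amplitude $2$. The end-to-end zero-energy Green's function of a uniform tight-binding chain of even length $M$ and hopping $t$ is $(-1)^{M/2}/t$. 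This follows by solving $\hat H_{XX}\psi=-|\underline{N-1}\rangle$ site by site: at zero energy the recursion forces every other amplitude to vanish, and the remaining amplitudes alternate in sign with magnitude $1/t$. With $t=2$ this gives $|G_{2,N-1}(0)|=1/2$, so $|g|=2\lambda^2$ and therefore $h_e=\lambda^2$.

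Finally, I would check that the neglected terms ($O(\lambda^4)$ and $O(h\lambda^2)$) only shift the exceptional point at higher order. I would also confirm that the $\hat{\mathcal P}\hat{\mathcal T}$ structure, with $P_0\hat WP_0$ being antisymmetric under $1\leftrightarrow N$, is preserved, so that the exceptional point of the full Hamiltonian is indeed $h_e=\lambda^2+O(\lambda^4)$.
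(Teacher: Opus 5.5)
Your argument has the same skeleton as the paper's proof. You do a second-order Schrieffer--Wolff elimination of the gapped bulk onto $P_0$, with $ih\hat{W}$ entering only through $P_0\hat{W}P_0$. The diagonal $\lambda^2$ terms cancel under $a\leftrightarrow -a$, and the $2\times2$ matrix of Eq.~(\ref{eq:H_SW}) becomes defective at $h_e=|g|/2$.

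The genuine difference is how the key sum is evaluated. The paper only asserts that $\chi_{\text{e}}=\sum_a(-1)^a\sin^2\theta_a/\cos\theta_a=\pm(N-1)$, saying it is ``calculated and validated numerically.'' You instead identify $g=4\lambda^2G_{2,N-1}(0)$ and solve the zero-energy recursion on the bipartite bulk chain: every other amplitude vanishes and the rest alternate with modulus $1/t$. This actually proves $|g|=2\lambda^2$ for every even $N$. It also shows why even $N$ is needed, namely invertibility of $\hat{H}_{XX}$ on the bulk. On this step your version is strictly stronger than the paper's.

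Two small repairs are needed. First, the identity you display is off by a factor of two. Since $g=-\lambda^2\frac{2}{N-1}\sum_a(-1)^{a-1}\sin^2\theta_a/\cos\theta_a$, the target $|g|=2\lambda^2$ reads $\frac{1}{N-1}\sum_a(-1)^{a-1}\sin^2\theta_a/\cos\theta_a=\pm1$; for $N=4$ the sum equals $-3$. Your Green's-function computation delivers exactly this value, so only the displayed line is wrong.

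Second, you also need to check that the bulk levels stay real. Without it you cannot conclude that the full Hamiltonian is unbroken for $h<\lambda^2$, nor that $\lambda^2$ is \emph{the} exceptional point. The paper does this in Eq.~(\ref{eq:bulk_ee}). Within the bulk, $ih\hat{W}$ couples $|\tilde{a}\rangle$ only to $|{-\tilde{a}}\rangle$. Each pair then gives $\pm\sqrt{E_a^2-4h^2}$, up to real $O(\lambda^2)$ shifts. These are real because $2h\sim2\lambda^2\ll\min_a|E_a|=4\sin\frac{\pi}{2(N-1)}$ for even $N$.
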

\begin{proof}
    For small link-interactions, i.e., $\lambda\ll 1$, $\hat{\mathcal{H}}_{\text{nH}}(\delta\!=\!1)$ can be perturbatively expanded upto second order in $\lambda$ using Schrieffer-Wolff transformation~\cite{Schrieffer_1966, MacDonald_1988, Bravyi_2011, Starkov_2023}.  Here an appropriate generator $\hat{S}$ (proportional to the perturbation strength) is used such that the off-diagonal terms in the $P_0$ subspace vanishes upto the first order terms of the perturbation $\lambda$ (see Appendix~\ref{app:xx_sw} for derivation). The resulting interaction between the links is given by the transformed Hamiltonian in the $P_0$ subspace as
    \begin{equation}
        \hat{H}_{\ell_1 \ell_2}^{\text{SW}} = P_0 \left(e^{\hat{S}} \hat{\mathcal{H}}_{\text{nH}} e^{-\hat{S}}\right) P_0^{\dagger} =   
        2\left[\begin{matrix} 
            ih & (-1)^{\frac{N}{2}}\lambda^2 \\ 
            (-1)^{\frac{N}{2}}\lambda^2 & -ih 
        \end{matrix}\right].
        \label{eq:H_SW}
    \end{equation}
    with the eigenvalues $E^\prime_{a=1,N} =\pm2\sqrt{\lambda^4-h^2}$.
    The other eigenvalues from the bulk can be computed from the non-degenerate perturbation theory as
    \begin{align}
        E^\prime_{a} &= E_a+ \langle\tilde{a}|\lambda\hat{V}_{\ell} \!+\! ih\hat{W}|\tilde{a}\rangle \!+\! \sum_{\mathclap{k=2, k\neq a}}^{N-1} \frac{|\langle\tilde{a}|\lambda\hat{V}_{\ell} \!+\! ih\hat{W}|\tilde{k}\rangle|^2}{E_a - E_k} \nonumber\\
        &= 4\cos\theta_a + \frac{2\lambda^2 v_a^2}{\cos\theta_a} - \frac{h^2}{2\cos\theta_a},
        \label{eq:bulk_ee}
    \end{align}
    for $a \!=\! 2,3\dots, N\!-\!1$. Therefore, $E^\prime_{a\neq 1,N}$ are real, and the exceptional points are given by the eigenvalues of the link-interaction for $\lambda\ll 1$ which is obtained from first line of Eq. (\ref{eq:bulk_ee}). Hence, the system becomes defective at the staggered imaginary field,
    \begin{equation}
        h_e(\delta\!=\!1, N, \lambda) = \lambda^2,
        \label{eq:ep_nH}
    \end{equation}
    with the system being in the broken phase when $|h|>|h_e|$ for small $\lambda (\ll 1)$.
\end{proof}

On the other hand, for the non-Hermitian SSH model $\hat{\mathcal{H}}_{\text{nH}}(\delta\!\neq\!1)$, we find the exceptional points by numerically diagonalization of the system~\cite{arma2016}. Again, a similar relation with $h_e(\delta, N, \lambda) = a_2(\delta, N)\lambda^2 +a_1(\delta, N)\lambda$, provides the exceptional line for weak link-interaction strengths $\lambda\ll 1$. 
For $\delta\!<\!1$, both $a_2(\delta\!<\!1, N)$ and $a_1(\delta\!<\!1, N)$ increase as  the system-size $N$ (upto moderate $N$) grows, while $a_2(\delta\!<\!1, N)$ decreases with increasing $N$ for larger system-sizes and $a_1(\delta\!<\!1, N)$ saturates to a finite nonvanishing value. This results in quadratic exceptional line for small system-size $N$ and linear exceptional line for larger $N$ when $\delta<1$. For $\delta\!>\!1$, the exceptional points follow a quadratic scaling with $a_2(\delta\!>\!1, N)$ decreasing with increasing $N$, and $a_1(\delta\!>\!1, N)\sim10^{-4}$ (see details in Appendix~\ref{app:ssh_ep}).

\section{Role of non-Hermiticity in creating highly entangled links}
\label{sec:nH_dyn}

For weak bulk-link interactions ($\lambda \ll 1$), the oscillatory behavior of $\mathcal{E}_{\ell_1 \ell_2}(t)$ under unitary evolution (without the system-bath interactions) generates maximal LDE between the links at specific time intervals~\cite{Trifunovic_2013, Abaach2023, Soares_2025, agarwal2026_b}. This necessitates either terminating the protocol or decoupling the link from the bulk at an optimal time $t^*$, or waiting for a period $\sim 1/\lambda^2$, which becomes prohibitively large for weak interaction strengths \(\lambda\). Hence an immediate question arises, ``Is it possible to generate near-unit, and frozen long-distance entanglement over time with the non-Hermitian (open) systems?"

In this section, we aim to address this question and examine the role of imaginary alternating magnetic field on the LDE generation. The non-unitary dynamics is carried out through a \(\hat{\mathcal{P}}\hat{\mathcal{T}}\)-symmetric non-Hermitian NN model, given in Eq.~(\ref{eq:nhH_def}), for enhancement and stabilization of long-distance entanglement. We report the high value of entanglement, i.e., $\mathcal{E}_{\ell_1\ell_2}^{\text{avg}}\approx 1$, and determine the suitable tuning parameters leading to $\mathcal{E}_{\ell_1 \ell_2}^{\text{avg}}\approx 1$, thereby highlighting the impact of open quantum system dynamics in contrast to the closed unitary dynamics of the corresponding Hermitian model.

\begin{figure}
    \centering
\includegraphics[width=1.0\linewidth]{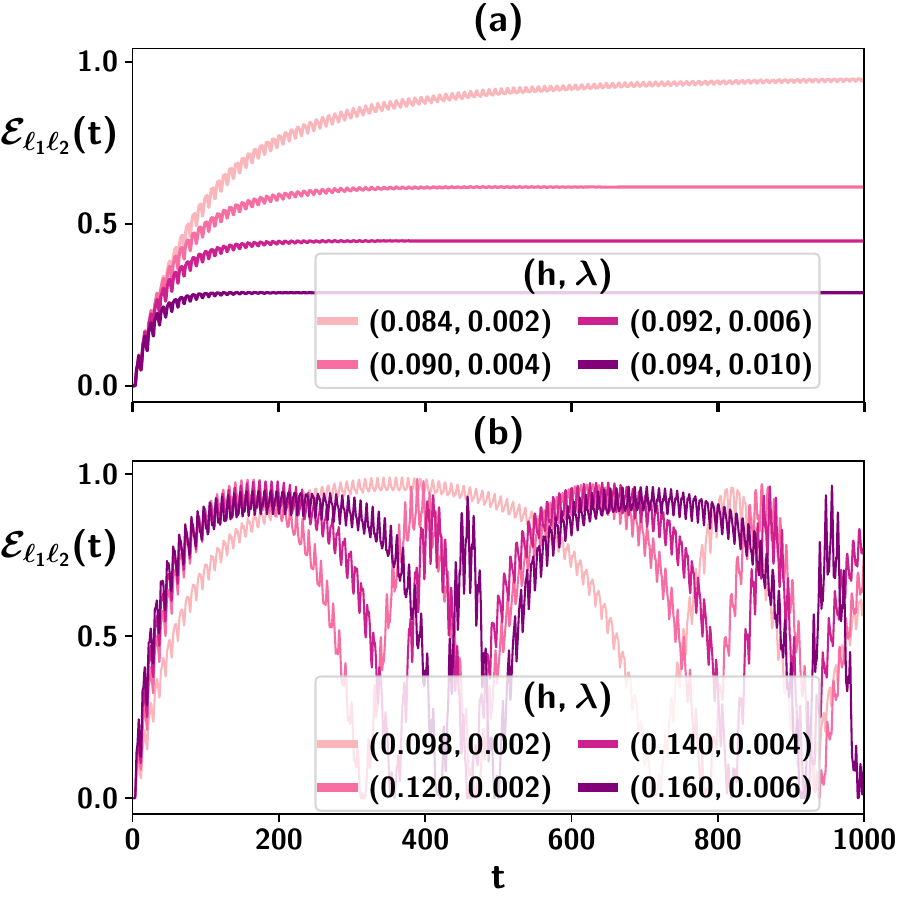}
    \caption{{\bf Comparison between long-distance entanglement \(\mathcal{E}_{\ell_1 \ell_2}\) of unbroken and broken regimes of the non-Hermitian SSH model.} \(\mathcal{E}_{\ell_1 \ell_2}\) (ordinate) vs time (abscissa) for \(\delta=1.2\) where (a) belongs to the broken regime and (b) to unbroken regime. The model facilitates long-distance entanglement creation in both regimes, with frozen entangled links in the broken regimes for a long period of time (a). However, the entanglement in the unbroken regime fluctuates, and the freezing periods are shorter. The system is initialized as \(|\Psi(0)\rangle = |1\rangle \otimes|0\rangle^{\otimes N-1}\) with the system-size \(N=16\). All the axes are dimensionless.}
    \label{fig:freezed_ent_vs_t}
\end{figure}

\begin{figure}
    \centering
\includegraphics[width=1.0\linewidth]{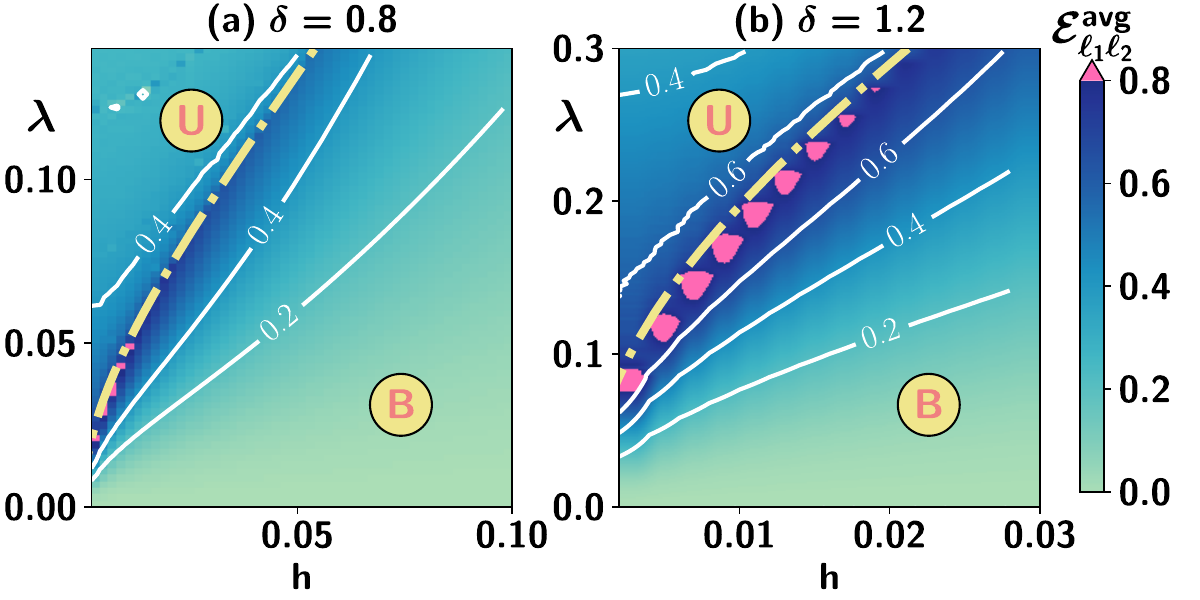}
    \caption{{\bf Time-averaged entanglement between the links \(\mathcal{E}_{\ell_1 \ell_2}^\text{avg}\): Exceptional points are special.} Time-averaged LDE \(\mathcal{E}_{\ell_1 \ell_2}^{\text{avg}}\) of the SSH model in the parameter space \((h,\lambda)\) (abscissa, ordinate) for (a) \(\delta=0.8\) and (b) \(\delta=1.2\), with \(N=16\). The exceptional points are represented by dashed-dot lines, which segregate the unbroken (U) and broken (B) phases. High LDE (\(> 0.8\)) (pink colored regions) can be obtained near the exceptional lines of the model, especially for \(\delta\geq 1\) (see Theorem 2, for \(\delta =1\)). The initial state of the bulk and link system is \(|\Psi(0)\rangle = |\underline{1}\rangle\). All the axes are dimensionless.}
    \label{fig:contour}
\end{figure}

\subsection{Entangled quantum links with \(\hat{\mathcal{P}}\hat{\mathcal{T}}\)-symmetric non-Hermitian NN SSH model}
\label{sec:dynamics_nn}

Let us explore the time dynamics of the LDE, \(\mathcal{E}_{\ell_1 \ell_2} (t)\) between the links by employing $U(1)$- and
\(\hat{\mathcal{P}}\hat{\mathcal{T}}\)-symmetric non-Hermitian SSH model as the evolving Hamiltonian, in the \(n^{(\uparrow)}\!=\!1\) sector. We consider the non-Hermitian SSH model in three regimes: (1) \( \delta<1 \), (2) \(\delta\!=\!1\) (the \(XX\) model) and (3) \(\delta>1\) for our analysis. Before investigating the regions (1) and (3), let us first exhibit that when \(\delta=1\), establishing a maximally entangled link is possible.

\begin{theorem}
    \label{thm:dyn}
    Maximally entangled links $(\ell_1, \ell_2)$, with $\mathcal{E}_{\ell_1\ell_2}^{\text{avg}} \!=\! 1$, are obtained by evolving the initial state $\ket{\Psi(0)} \!=\! \ket{\underline{1}}$ with the non-Hermitian \(\hat{\mathcal{P}}\hat{\mathcal{T}}\)-symmetric $XX$ model having weak link-interactions $\lambda\ll1$ near the exceptional points in the broken phase at long time limit. 
\end{theorem}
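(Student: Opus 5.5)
We will work entirely within the effective two-level description supplied by Theorem~\ref{thm:xx_ep}. For $\lambda\ll1$ and $h$ of order $\lambda^2$, the Schrieffer--Wolff transformation separates the zero-energy link subspace $P_0$ from the bulk. The bulk modes have real energies $E'_a\approx 4\cos\theta_a$, which are gapped away from zero because $N$ is even. The initial state $\ket{\underline{1}}=\ket{\tilde 1}$ lies in $P_0$ up to corrections of order $\lambda$ generated by $e^{\hat S}$. The plan is therefore to evolve $\ket{\underline 1}$ with the $2\times2$ matrix $\hat H^{\text{SW}}_{\ell_1\ell_2}$ of Eq.~(\ref{eq:H_SW}). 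We will then argue that the bulk contributes only $O(\lambda)$ amplitude, which stays bounded because the bulk eigenvalues are real and therefore neither grow nor decay. The reduced state $\rho_{1,N}(t)$ then becomes, up to $O(\lambda^2)$ weight, the pure state $\alpha(t)\ket{\uparrow\downarrow}+\beta(t)\ket{\downarrow\uparrow}$. Its logarithmic negativity is $\log_2(1+2|\alpha\beta|)$ after normalization.

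The main computation is the explicit non-unitary evolution in the broken phase $h>\lambda^2$. Write $\kappa=2\sqrt{h^2-\lambda^4}$. The eigenvalues of $\hat H^{\text{SW}}_{\ell_1\ell_2}$ are $\pm i\kappa$, so
\[
e^{-i\hat H^{\text{SW}}t}=\cosh(\kappa t)\,\mathbb{1}-\frac{i\sinh(\kappa t)}{\kappa}\hat H^{\text{SW}}.
\]
Acting on $\ket{\tilde1}$, this gives unnormalized amplitudes
\[
\alpha=\cosh\kappa t+\frac{2h}{\kappa}\sinh\kappa t,\qquad \beta=-i(-1)^{N/2}\frac{2\lambda^2}{\kappa}\sinh\kappa t.
\]
As $t\to\infty$, both amplitudes are dominated by the growing mode $e^{\kappa t}$. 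The ratio therefore freezes at
\[
\left|\frac{\beta}{\alpha}\right|\to\frac{2\lambda^2}{\kappa+2h}=\frac{\lambda^2}{\sqrt{h^2-\lambda^4}+h}.
\]
The same ratio follows from the eigenvector of the amplifying mode. This ratio tends to $1$ as $h\to\lambda^{2+}$, that is, as one approaches the exceptional point $h_e$ of Eq.~(\ref{eq:ep_nH}) from the broken side. The stationary link state then tends to $(\ket{\uparrow\downarrow}\pm i\ket{\downarrow\uparrow})/\sqrt2$, which has $\mathcal{E}_{\ell_1\ell_2}=1$.

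Because the frozen value is reached exponentially fast, on a timescale $1/\kappa$, the time average over $[T/2,T]$ equals the frozen value up to $O(e^{-\kappa T})$. The limit $h\to h_e^+$ with $T\to\infty$ then yields $\mathcal{E}^{\text{avg}}_{\ell_1\ell_2}\to1$. We should note that $\kappa T\gg1$ requires $h$ to be not exactly at $h_e$. Exactly at the exceptional point the evolution is polynomial, $\alpha=1+2ht$ and $\beta\propto 2\lambda^2 t$. This case also gives ratio $\to1$, with $O(1/t)$ convergence, and it can be treated as a separate limit.

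The main obstacle is controlling the bulk leakage and the Schrieffer--Wolff corrections under non-unitary evolution with normalization. The link amplitudes grow like $e^{\kappa t}$, while the bulk amplitudes, of order $\lambda$, stay bounded since they are mere phases. After normalization the bulk weight therefore decays, which makes the leakage argument favourable. We must also check that the second-order bulk corrections in Eq.~(\ref{eq:bulk_ee}) remain real, so that no bulk mode outgrows the link mode, and that the $O(\lambda)$ rotation by $e^{\hat S}$ of the observable only perturbs $\mathcal{E}$ by $O(\lambda^2)$. Finally we take $\lambda\to0$ with $h/\lambda^2\to1^+$.
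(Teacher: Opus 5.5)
Your proposal is correct and follows essentially the paper's route. You reduce to the Schrieffer--Wolff $2\times2$ link Hamiltonian and use the fact that, in the broken phase, the amplifying eigenmode dominates at long times, with its link amplitudes becoming balanced as $h\to\lambda^{2+}$. Your closed-form $\cosh/\sinh$ propagator is cleaner than the paper's eigenvector expansion, which becomes ill-conditioned near the exceptional point, and it makes explicit the order of limits ($\kappa T\to\infty$ before $h\to h_e^{+}$) and the exactly defective case, both of which the paper glosses over.

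One point to tighten: because $\kappa\to0$ in that limit, checking reality of the bulk energies only to second order does not rule out a bulk mode outgrowing the link mode. Exact reality follows from $\hat{\mathcal{P}}\hat{\mathcal{T}}$-symmetry, since the bulk levels $4\cos\theta_a$ are simple and gapped from zero for even $N$, so a small $\hat{\mathcal{P}}\hat{\mathcal{T}}$-symmetric perturbation cannot turn them into complex-conjugate pairs.
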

\begin{proof}
    For an initial state $\ket{\Psi(0)} \!=\! \sum_{a=1}^{N} c_a |\tilde{a}^\prime\rangle$ in eigenbasis of  $\hat{\mathcal{H}}_{\text{nH}}(\delta\!=\!1)$ in the $n^{(\uparrow)} \!=\! 1$ sector, the dynamical state $\ket{\Psi(t)}$ can be expressed as
    \begin{equation}
        \ket{\Psi(t)} \!=\! \frac{1}{\mathcal{N}(t)}\sum_{a=1}^{N} c_a e^{-iE_a^{\prime}t} |\tilde{a}^\prime\rangle, \nonumber
    \end{equation}
    where $\mathcal{N}(t)$ is the normalization factor. The eigenvalues of $\hat{\mathcal{H}}_{\text{nH}}(\delta\!=\!1)$ for weak link-interaction strengths $\lambda\ll1$ are given by $E^{\prime}_{a=1,N} \!=\! \pm2\sqrt{\lambda^4\!-\!h^2}$ for states in the $P_0$ subspace and by Eq.~(\ref{eq:bulk_ee}) for the bulk states. Denoting $E^{\prime}_{1} \!\equiv\! E^{\prime}_{-}$, $E^{\prime}_{N} \!\equiv\! E^{\prime}_{+}$, the corresponding eigenvectors are $\ket{\phi_{\pm}} \!=\! \frac{1}{\mathcal{N}_{\pm}}\left[c^{(\pm)}_1, 0,\cdots ,0,c^{(\pm)}_N\right]^T$, where $c^{(\pm)}_1 \!=\!ih \!\pm\! E^{\prime}_{\pm}$, $c^{(\pm)}_N \!=\!\lambda^2$ and $\mathcal{N}_{\pm} \!=\!\sqrt{2}\lambda^2$ is the normalization factor, when $h< \lambda^2$, i.e., in the unbroken phase. On the other hand, when the system is in the broken phase with $h> \lambda^2$, $\mathcal{N}_{\pm} \!=\!\sqrt{2h(h\pm|E^{\prime}_{\pm}|)}$. At the exceptional point, $h \!=\! h_{e} \!=\! \lambda^2$, the eigenvectors coalesce with $\ket{\phi^\prime_{+}} \!=\! \ket{\phi^\prime_{-}}$. Note that in the unbroken phase and at the exceptional point, $|c^{(\pm)}_1|^2 \!=\! |c^{(\pm)}_N|^2$, while $|c^{(\pm)}_1|^2 \!\approx\! |c^{(\pm)}_N|^2$ in the broken phase and near the exceptional points, giving near-maximally entangled links as the eigenstates of \(\hat{\mathcal{H}}_{\text{nH}}\) (see Appendix~\ref{app:log_neg}).
    
    Therefore, during evolution within the broken phase where $h>\lambda^2$, the eigenvalues $E^\prime_{\pm}$ become imaginary, causing the system to approach $\ket{\phi^\prime_{-}}$ exponentially over time. Consequently, for non-zero coefficients $c_a, c_N \!\neq\! 0$, the state $\ket{\Psi(t \!\to\! \infty)} \!=\! \ket{\phi^\prime_{-}}$ emerges as the stationary state. Because $\ket{\phi^\prime_{-}}$ is maximally entangled near the exceptional points, highly entangled links are established in the long-time limit of the broken phase. Specifically, starting from the initial state $\ket{\Psi(0)} \!=\! \ket{\underline{1}}$ (where $c_1\approx1$ for $\lambda\ll1$), the entanglement, quantified for example by logarithmic negativity~\cite{vidal_pra_2002} approaches unity ($\mathcal{E}_{\ell_1\ell_2}\to1$) as the system nears the transition ($h-\lambda^2\to0^{+}$), yielding stable, high entanglement.
\end{proof}
    
In the unbroken phase with $h<\lambda^2$, all the eigenvalues are real, and the evolution is oscillatory with time, resulting in fluctuating $\mathcal{E}_{\ell_1 \ell_2}(t)$.  In the unbroken regime, while $\mathcal{E}_{\ell_1\ell_2}(t)\sim1$ is obtained at specific times, $\mathcal{E}_{\ell_1\ell_2}^{\text{avg}}\!<\!1$ due to the fluctuating behavior.

Let us now move to study the entanglement generation in the dynamics of the $U(1)$- and
\(\hat{\mathcal{P}}\hat{\mathcal{T}}\)-symmetric non-Hermitian SSH model with $\delta\!\neq\!1$, i.e., for $\delta\!>\!1$ and $\delta\!<\!1$. Such a study is possible only through extensive numerical simulations as discussed before upto system-size \(N\leq100\). In-depth analysis in this regime reveals some interesting features of non-Hermiticity in the LDE profile and, in addition, shows benefits over the study in the corresponding Hermitian counterpart. Note that (\(\lambda=0\)) leads to unentangled links by definition. Further, we find that the system undergoes a transition from the broken to the unbroken regime in the \((h,\lambda)\)-plane. As discussed in Sec. \ref{sec:framework}, when \(\lambda\) is small, which is typically assumed in this architecture, the exceptional points can be found for a very weak magnetic field, i.e., the distinctive behavior of entanglement in the unbroken and broken regimes can be scrutinized only in the presence of a weak magnetic field.

{\it Observation 1: Frozen dynamical entanglement in the broken regime.} When the system is in the broken region, link entanglement fluctuates, but the fluctuation is very small, and it increases with time before converging to a fixed value. Specifically, this converged value near \(\mathcal{E}^{\text{avg}}_{\ell_1 \ell_2}\) depends on the choice of \((h,\lambda)\)-pair. The high value of entanglement can be obtained when the system is close to the exceptional points (see Fig. \ref{fig:freezed_ent_vs_t}(a) and Table \ref{table:1}). In contrast, in the unbroken regime, entanglement fluctuates as well as oscillates with time and does {\it not converge} to a fixed value. To capture the fluctuating behavior of the link entanglement, we compute the average standard deviation, \(\int_{T/2}^T \sqrt{\langle\mathcal{E}_{\ell_1 \ell_2}^2\rangle - \langle\mathcal{E}_{\ell_1 \ell_2}\rangle^2} dt\). We find that in the unbroken region, it is \(\gtrsim\mathcal{O}(10^{-1})\)  while for the broken one, it is almost negligible, i.e.,  \(\lesssim\mathcal{O}( 10^{-3})\), thereby capturing the contrasting trend of unbroken-broken regimes (see Fig. \ref{fig:freezed_ent_vs_t}).

However, in both cases, there are certain time regions at which \(\mathcal{E}_{\ell_1 \ell_2}(t)\) can achieve almost unit entanglement, although average entanglement in the unbroken regime never crosses a certain threshold value. Further, the high converged value can be found only when the system is close to the exceptional line. More precisely, when the system approaches the exceptional line from the broken phase, the \(\mathcal{E}_{\ell_1 \ell_2}^{\text{avg}} \approx 1\) while similar neighborhood in the unbroken phase can produce \(\mathcal{E}_{\ell_1 \ell_2}^{\text{avg}} < 1\).

\begin{table}[h!]
\begin{center}
\begin{tabular}{||c@{\hspace{1.2cm}}c@{\hspace{1.2cm}}c@{\hspace{1.2cm}}c@{\hspace{0.2cm}}||}
 \hline
 \quad\(\delta\) & \(\lambda\) & \(h\) & \(\mathcal{E}_{\ell_1 \ell_2}^{\text{avg}}\ \) \\ [0.5ex] 
 \hline\hline

\quad 0.7 & 0.012 & 0.002 & 0.827\quad  \\
\hline
\quad 0.8 & 0.02 & 0.002 & 0.917 \quad  \\
\hline
\quad 1.0 & 0.044 & 0.002 & 0.948\quad  \\
\hline
\quad 1.0 & 0.06 & 0.004 & 0.875\quad   \\
\hline
\quad 1.2 & 0.084 & 0.002 & 0.957\quad  \\
\hline
\quad 1.4 & 0.146 & 0.002 & 0.961\quad  \\
 \hline
\end{tabular}
\caption{The dependency of time-averaged value of the link entanglement, i.e., \(\mathcal{E}_{\ell_1 \ell_2}^{\text{avg}}\) on system parameters \((h,\lambda)\), near the exceptional points. Here, the system size is \(N=16\).}
\label{table:1}
\end{center}
\end{table}

{\it Observation 2: Neighborhoods of exceptional points are special.} As discussed in Observation 1, high time-averaged link entanglement can only be obtained when the system is close to the exceptional point for all values of \(\delta\) and in the broken phase. However, in the broken phase, due to low fluctuations, the average entanglement produced is relatively higher than the unbroken phase. For instance, we find that when the absolute sum of the imaginary eigenvalues of the Hamiltonian, ensuring that the system is in broken phase, is less than \(\mathcal{O}(10^{-3})\), very high average entanglement between links is created. Almost maximally entangled link state can be produced when the system is in the limit of exceptional point \footnote{We note here that the exceptional points are system-size independent when \(\delta =1\), i.e., for the \(XX\) model while it depends on the system-size for \(\delta\neq1\). However, the features reported here remain same, independent of system-size.}.

{\it Observation 3: Non-Hermitian vs Hermitian.} 
To probe the benefit of open quantum system, leading to non-Hermitian systems over Hermitian counterparts, we introduce a quantity called maximum time-averaged link entanglement defined as
\begin{equation}
\mathcal{E}_{\max}^{\text{avg,X}} = \max_{h,\lambda} \mathcal{E}_{\ell_1 \ell_2}^{\text{avg}}, \quad \text{X=nH,H}
\label{eq:max_timeavg}
\end{equation}
where averaging is performed over time, and maximization is carried out in the \((h,\lambda)\)-plane. For a fixed and moderate bulk-size, we observe that when \(\delta \approx 1\) and when \((h,\lambda)\) values are close to the exceptional points, the maximum time-averaged entanglement between \(\ell_1\) and \(\ell_2\) achieves a near maximal value. Precisely, there exists a critical bulk-size above which entanglement starts decreasing and these critical values depend on \(\delta,h\), and \(\lambda\) values. In contrast, when the evolution is governed by the Hermitian model, the overall trend of the link entanglement follows a similar behavior as in the non-Hermitian case, although it never attains the maximum value. In particular, for a fixed \(N\), we find \(\mathcal{E}_{\max}^{\text{avg,nH}} \geq \mathcal{E}_{\max}^{\text{avg,H}}\), as depicted in Fig. \ref{fig:scal_dyn}. For example, with \(\delta=1.2\), \(N=16\), \(\mathcal{E}_{\max}^{\text{avg, H}} =0.765\) (with \(\lambda=0.07,h=0.001\)), while \(\mathcal{E}_{\max}^{\text{avg, nH}} =0.987\) with \(\lambda=0.06,h=0.001\)). It indicates that the effective non-Hermitian model obtained due to the system reservoir interaction can be more advantageous to create entanglement between distant links through bulk than the corresponding Hermitian isolated model.

{\it Observation 4: \(XX\) model is best among the class of SSH models.} If one considers maximum time-averaged link entanglement defined in Eq. (\ref{eq:max_timeavg}), we find that near exceptional point, \(\mathcal{E}_{\max}^{\text{avg,nH}}\) with \(\delta=1\) provides maximal entanglement compared to the evolving Hamiltonian with \(\delta \neq1\) (see Fig. \ref{fig:scal_dyn}). There exists \(\delta (>1)\) for which \(\mathcal{E}_{\max}^{\text{avg,nH}}(\delta>1) < \mathcal{E}_{\max}^{\text{avg,nH}}(\delta<1)\) for a fixed system-size (eg. comparing \(\mathcal{E}_{\max}^{\text{avg,nH}}(\delta=1.4) =0.936\) and \(\mathcal{E}_{\max}^{\text{avg,nH}}(\delta=0.5)=0.321\) for \(N=20\)) while the opposite ordering can also be found for pairs of \(\delta>1\) and \(\delta<1\). The entire analysis clearly indicates that the \(XX\) model with \(\delta=1\) with suitable pairs \((h,\lambda)\) can produce maximal link entanglement, which the other classes of SSH models fail to generate.

{\it Observation 5: Scaling with system-size: } The time-averaged link entanglement, maximized over \((h,\lambda)\)-plane, decays with the system-size \(N\) for all \(\delta\) values. Upto moderate system-size, i.e., \(N \leq 50\), the \(\mathcal{E}_{\max}^{\text{avg},X}\) decays slowly in the neighborhood of \(\delta=1\). Specifically, for the non-Hermitian model, the scaling for \(\delta=1\) is \(N^{-0.034}\), whereas for \(\delta\) values in the neighborhood of unity, i.e., \(\delta=0.9\), it is \(N^{-0.06}\) and for \(\delta=1.1\), it is \(N^{-0.026}\) (see Fig. \ref{fig:scal_dyn} (a)). Moreover, when \(N>50\), the decrease is slow in the case of the \(XX\) model as compared to the cases of \(\delta \neq 1\), thereby establishing the advantage of the \(XX\) model in link entanglement creation in open systems. Note that the maximum time-averaged entangled link in the corresponding Hermitian model decreases with increasing system-size as well, in a similar fashion to the non-Hermitian case (see Fig. \ref{fig:scal_dyn} (b)), although the initial values of the Hermitian case are much lower than those of the non-Hermitian ones.

{\it Observation 6: Initial state dependence.} The results presented above are for a particular initial state \(\ket{\Psi(0)} = |\underline{1}\rangle\) for both the non-Hermitian and Hermitian SSH models. It is now natural to ask: ``Does the entanglement generation depend on the initial states?" The answer turns out to be more involved. If we restrict to the initial state lying in the $\{n^{(\uparrow)}\!=\!0,1\}$ sectors with non-vanishing overlap with the eigenstates of the links, i.e., \(|\Psi(0)\rangle = \ket{\psi^{\theta_1, \phi_1}}_{\ell_1}\otimes \ket{\downarrow}^{\otimes N-2}\otimes\ket{\psi^{\theta_2, \phi_2}}_{\ell_2}\)  \( \left(\ket{\psi^{\theta_i, \phi_i}} \!=\! \cos\frac{\theta_i}{2}\ket{\downarrow} \!+\! e^{i\phi_i}\sin\frac{\theta_i}{2}\ket{\uparrow}\right)\) (i=1,2) with \(\theta_{1(2)}\!\!=\!0\) and arbitrary \(\theta_{2(1)}\), the link entanglement generation discussed above trivially remains same (see Theorem~\ref{thm:dyn} for the $XX$ bulk). If we analyze two initial states in the $\{n^{(\uparrow)}\!=\! 0, 1, 2\}$ sectors, the dimension of the Hilbert space become \(1\), \(N\) and \(N(N\!-\!1)/2\) respectively. Specifically, let us consider the initial states \(\ket{\Psi(0)}_{11} =\ket{\uparrow}_{\ell_1} \otimes \ket{\downarrow}_{B}^{\otimes N-2}\ket{\uparrow}_{\ell_2}\) and \(\ket{\Psi(0)}_{++} =\ket{+}_{\ell_1} \otimes \ket{\downarrow}_{B}^{\otimes N-2}\ket{+}_{\ell_2}\) where \(\ket{+} = \frac{1}{\sqrt{2}} (\ket{\uparrow} + \ket{\downarrow})\). We find the time-averaged entanglement in the broken regime, maximized over \((h,\lambda)\)-plane, is slightly lower when $n^{(\uparrow)}\!=\!2$ sector, as compared to the dynamics in $n^{(\uparrow)}\!=\!1$ only. For example, with system-size \(N\!=\!16\) and the $XX$ model, \(\mathcal{E}_{\max}^{\text{avg,nH}} = 0.954\) for \(\ket{\Psi(0)}_{11}\) and \(\mathcal{E}_{\max}^{\text{avg,nH}} = 0.93\) for \(\ket{\Psi(0)}_{++}\), whereas for \(\ket{\Psi(0)}=|\underline{1}\rangle\), \(\mathcal{E}_{\max}^{\text{avg,nH}} = 0.958\) is slightly larger (of $O(10^{-3})$ order). Therefore, the results presented for the $n^{(\uparrow)}\!=\!1$ sector are valid for $n^{(\uparrow)}\!=\!2$ as well, with near maximally entangled links obtained near the exceptional points. Hence, the behavior of link entanglement generation with continuously monitored baths is robust and not specific to the initial state of the links \((\ell_1,\ell_2)\).

\subsection{The role of long-range interactions on LDE with non-Hermitian SSH model}
\label{sec:dynamics_lr}

After demonstrating the benefit of non-Hermiticity in the link-to-link entanglement creation over the Hermitian case, an immediate question arises: ``Does the benefit of non-Hermiticity in the entanglement generation persist if the long-range interactions are present in the evolving Hamiltonian?" We answer this question affirmatively. To establish this, we consider the long-range \(XX\) Hamiltonian in the bulk and in the link as
\begin{align}
 \nonumber \hat{H}_{B} &= \sum_{a=2}^{N-2}\sum_{b=a+1}^{N-1} \frac{1}{|a-b|^{\alpha}}\hat{\mathsf{H}}_{a, b},
 \\  
\nonumber \hat{V}_{\ell} &= \lambda\sum_{a=1,N}\sum_{b=2(\neq a)}^{N} \frac{1}{|a-b|^{\alpha}}\hat{\mathsf{H}}_{a, b}
\end{align} 
where \(\alpha\) is the power law decay strength in the spin-spin couplings. Note that the interactions of the links (\(\ell_1,\ell_2\)) with the bulk spins get diminished by the factor of \(\frac{\lambda}{|a-b|^{\alpha}}\), i.e., based on the separation between the spins. Such power law decaying variable-range interacting models naturally appear in several physical platforms \cite{long_range_review,bloch05,blochrmp} and hence such consideration in this context ensures that the scheme presented here is also independent of the physical architecture. The fall-off rate \(\alpha \geq 5\) mimics the nearest-neighbor \(XX\) model while \(\alpha <3 \) typically indicates the presence of long-range interactions in the system. In particular, there are three regimes of \(\alpha\) \cite{philip_taglia,GoroshkovLR}, in the thermodynamic limit for the variable-range interacting Ising model \cite{entropy_long_range_maciej_luca_prl_2012,VodolaLR}: (1) long-range  (\(0 \leq \alpha \leq 1\)), (2) quasi long-range (\(1 \leq \alpha \leq 2\)) and (3) short-range (\(\alpha \geq 2\)). Note, however, that these standard regimes are valid in the thermodynamic limit of system-size and since in our case, moderate system sizes are important from the implementation point of view, one can argue that these regimes are subject to change.

\begin{figure}
    \centering
\includegraphics[width=1.0\linewidth]{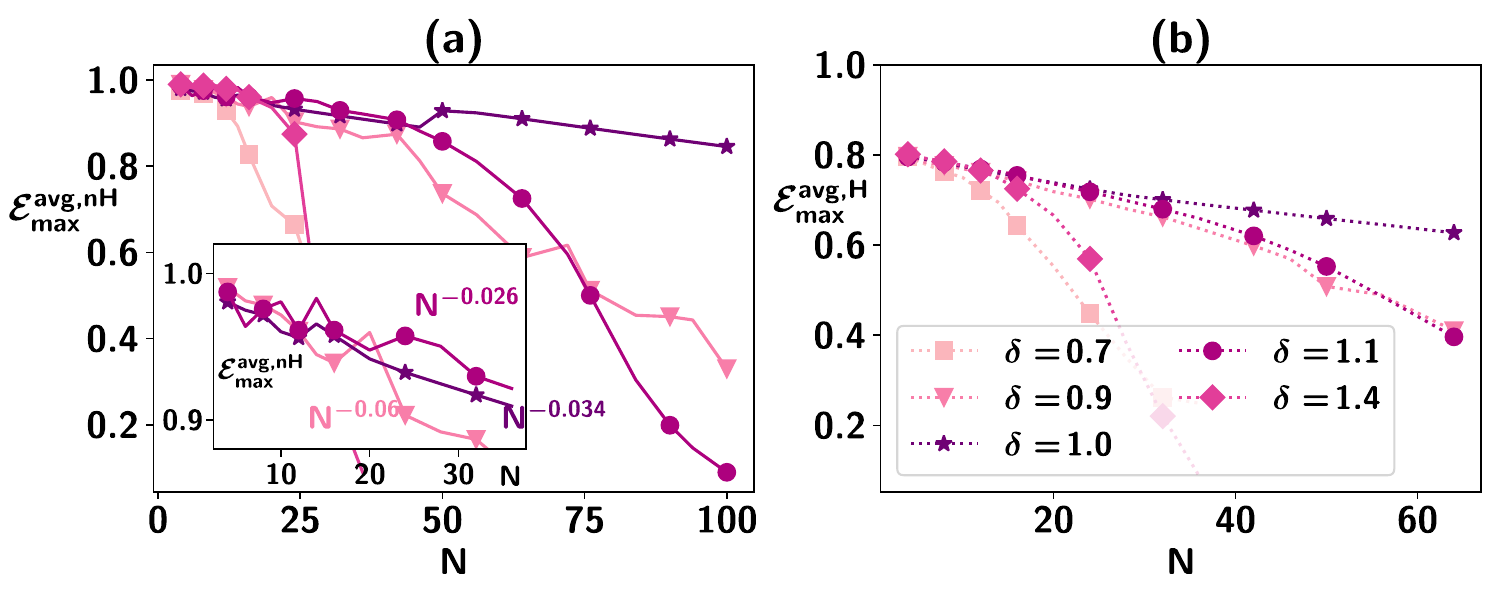}
    \caption{ {\bf Scaling of LDE with system-size: Comparison between non-Hermitian and Hermitian LDE generation. } The entanglement \(\mathcal{E}_{\max}^{\text{avg},X}\) (ordinate), averaged over time and maximized over the \((h,\lambda)\)-plane with respect to the system-size \(N\) (abscissa) for (a) non-Hermitian (X=nH) and (b) the Hermitian (X=H) models. Different lines denote different \(\delta\) values of the SSH model.  \(\mathcal{E}_{\max}^{\text{avg},X}\) decays slowly when \(\delta \approx 1\), whereas the LDE decays quickly with system-size $N$ when \(\delta \neq 1\) and high $N$, highlighting the significance of \(XX\) model. (Inset) When entanglement is close to unity (\(\mathcal{E}_{\max}^{\text{avg},nH} <0.9\)), the scaling exponents are found for \(\delta =0.9, 1.0 \) and \(1.1\). All the axes are dimensionless. }
    \label{fig:scal_dyn}
\end{figure}


\begin{figure}
    \centering
\includegraphics[width=0.9\linewidth]{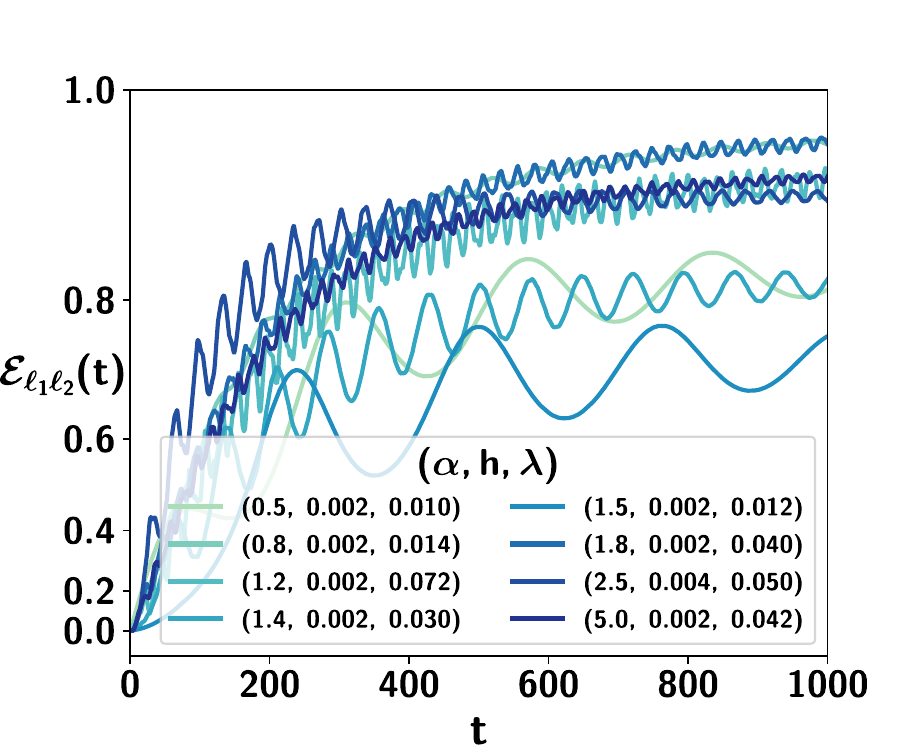}
    \caption{ {\bf Dynamical entanglement pattern with long-range interactions. }  \(\mathcal{E}_{\ell_1 \ell_2}\) (ordinate) vs time \(t\) (abscissa)  for different values of (\(\alpha, h, \lambda\))  in the \(XX\) model. For a fixed value of \(\alpha\), the pairs of \((h,\lambda)\)  are chosen in such a way that the time-averaged entanglement \(\mathcal{E}_{\ell_1 \ell_2}^{\text{avg}}\) is maximum. The maximization is performed when \(h \in [0,0.1]\) and \(\lambda \in [0,0.3]\) for \(N=16\). Specifically, LR interactions with \(\alpha =0.8\) and \(\alpha =1.8\) clearly provide benefits over the nearest-neighbor case (\(\alpha =5.0\)).  All the axes are dimensionless.}
    \label{fig:dyn_LR}
\end{figure}

{\it Observation 7: Advantage in long-range perturbations.} Following the same protocol as described for the NN model, we study \(\mathcal{E}_{\ell_1 \ell_2} (t)\) with the variation of time for different \(\alpha\) values. First of all, the overall nature of entanglement, i.e., monotonic increase of entanglement with fluctuations at initial times and saturating at large times remains the same, irrespective of \(\alpha\) values. The fluctuations as well as saturation value of entanglement depend on low values of \(\alpha \in [0,2]\). Secondly, we also observe that, like the NN model, the time-averaged entanglement, maximized over \((h,\lambda)\)-plane, occurs close to the exceptional line when the long-range interactions are present in the system. Further, subject to the choice of \((h,\lambda)\), there exist some \(\alpha\) values in  long-range and quasi long-range regimes, where the saturated value of \(\mathcal{E}_{\ell_1 \ell_2}(t)\) is higher than that can be obtained through the evolving Hamiltonian with nearest-neighbor interactions (see Fig. \ref{fig:dyn_LR}). Therefore, the long-range interactions clearly can provide benefit in the link entanglement creation over the nearest-neighbor ones.

In addition, for a particular choice of \(\alpha\), when the time-averaged entanglement is maximized over \((h,\lambda)\)-plane, non-Hermitian model again turns out to be more effective for high entangled link production than the corresponding Hermitian case, for moderate system sizes.

\begin{figure}
    \centering
\includegraphics[width=1.0\linewidth]{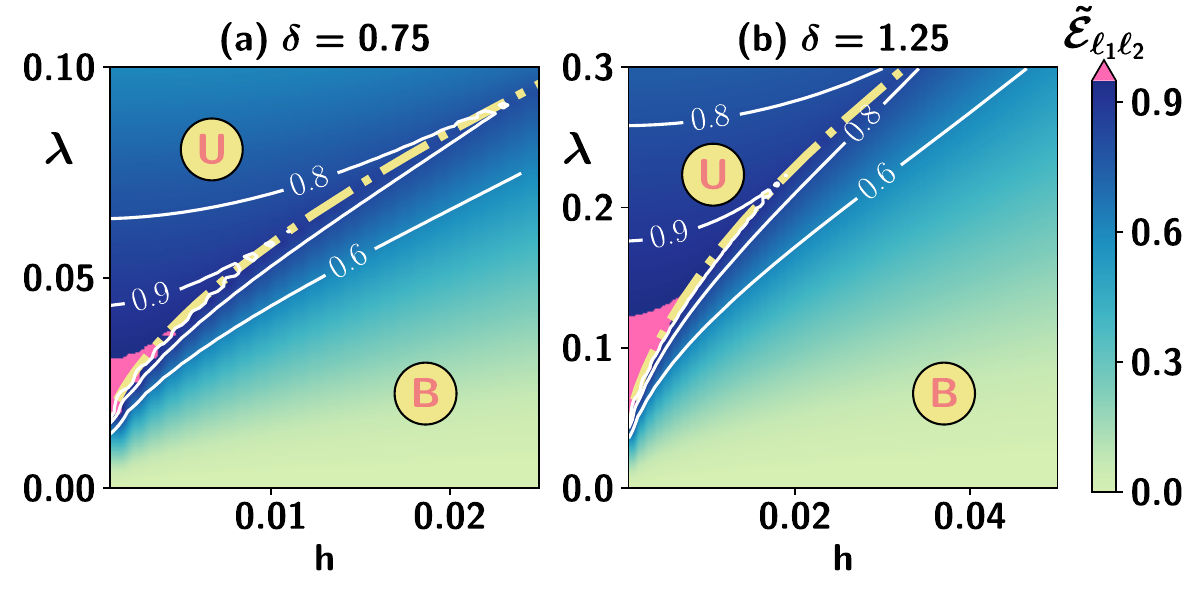}
    \caption{ { \bf Long-distance entangled links in a stationary state.} The variation of entanglement between the links $\tilde{\mathcal{E}}_{\ell_1 \ell_2}$ in the \((h,\lambda)\)-plane is presented when the system reaches the stationary state for (a) \(\delta=0.75\) and (b) \(\delta=1.25\), with system-size \(N=10\). The broken (B) and unbroken (U) regimes are separated by the dashed-dot lines. High entanglement between the links \((\ell_1,\ell_2)\) is produced close to the exceptional line with pink colored regions indicating \(\tilde{\mathcal{E}}_{\ell_1 \ell_2} \geq 0.95\).  All the axes are dimensionless.}
    \label{fig:stat_xx_nnlr}
\end{figure}
\section{Entangled quantum links in stationary states}
\label{sec:statics}

The dynamical generation of highly entangled links via \(\hat{\mathcal{P}}\hat{\mathcal{T}}\)-symmetric non-Hermitian Hamiltonian raises a question on the behavior of entangled links in the limit of a stationary eigenspectrum:  Can the system sustain high LDE at the stationary situation with the presence of non-Hermiticity? The affirmative answer then indicates that even before starting the dynamics, the initial state between the links can be highly entangled. This question can be addressed by examining the stationary states, which are the eigenstates obtained when the system is connected to a thermal reservoir with vanishing temperature. 
The ground states of the Hermitian models have been extensively studied in the limit of a weak interaction between the bulk and the link. It is shown that the links separated by the bulk can be highly entangled at zero temperature~\cite{,venuti_prl, VenutiPRA07, Ferreira_PRA_2008, Giampaolo_2010, Dhar_2016, agarwal2026_b}. In the non-Hermitian Hamiltonian, this architecture can give rise to control of two distinct parameter regions, unbroken and broken.\\
(1) {\it Unbroken regime: } In the unbroken regime, since all the eigenvalues of the Hamiltonian are real, the stationary state corresponds to the least energy eigenstate (ground state) of the Hamiltonian. \\
(2) {\it Broken regime: } As the eigenvalues are in the complex plane, we define the stationary state as the eigenstate associated with the eigenvalue that has the largest imaginary part. If multiple eigenvalues share this largest imaginary component, we further select the one with the smallest real part. Thus, the time \(t \rightarrow \infty\) state corresponds to an eigenvalue of the form ($a + ib$), where $b$ is maximized and, among those, the value $a$ is minimized. 

As discussed in Theorem~\ref{thm:dyn}, this definition of the stationary state in the broken regime corresponds to the long-time limit \(t \rightarrow \infty\), due to the exponential growth of the coefficient associated with the eigenstate having the largest imaginary part. Therefore, in the broken regime, the LDE of the stationary state corresponds to the dynamically generated LDE discussed in the previous section. In contrast, for the unbroken regime, the LDE of the lowest-energy eigenstate serves as an alternate route to obtain high and nonfluctuating entangled links. Note that while the system is still $U(1)$-symmetric, the stationary state can belong to any of the $n^{(\uparrow)}$ sectors, and in the unbroken regime, the stationary state is obtained from the $n^{(\uparrow)}=N/2$ sector. Therefore, the numerical analysis is restricted by the exponential growth of Hilbert space, and we illustrate the results with system-sizes $N\leq 10$.


The stationary state of the \(\hat{\mathcal{P}}\hat{\mathcal{T}}\)-symmetric non-Hermitian SSH Hamiltonian $\hat{\mathcal{H}}_{\text{nH}}$ in Eq.~(\ref{eq:nhH_def}) provides highly entangled links, which is quantified by $\tilde{\mathcal{E}}_{\ell_1 \ell_2}$, to distinguish from the dynamical framework. Specifically in the unbroken regime with $\lambda\ll1$ (and close to the exceptional line), the LDE in the stationary state is higher than in the broken regime, which is in contrast with the dynamical scenario. For example, for system-size \(N\!=\!10\) and $\delta\!=\!1.0$, \(\tilde{\mathcal{E}}_{\ell_1\ell_2} = 0.993\), when \((h,\lambda)\!=\!(0.001,032)\), which occurs in the unbroken phase close to the exceptional line $h_e=\lambda^2$ of the $XX$ model. Similar behavior of maximum entanglement between the links is seen for the SSH model, as shown in Fig.~\ref{fig:stat_xx_nnlr} for $\delta=0.75$ and $1.25$.

\begin{figure}
    \centering
\includegraphics[width=0.8\linewidth]{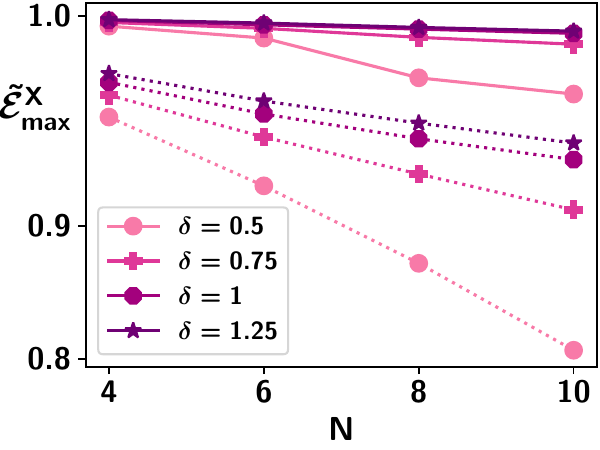}
    \caption{ {\bf System-size scaling of maximum LDE in the stationary state. } The maximum LDE \(\tilde{\mathcal{E}}_{\max}^{X}\) (ordinate) against system-size \(N\) (abscissa) for various \(\delta\) values, where \(\text{X}=\text{nH} (\text{H})\) represents non-Hermitian (Hermitian) case with solid (dotted) lines. The non-Hermitian SSH model provides benefit as $\tilde{\mathcal{E}}^{\text{nH}}_{\max}>\tilde{\mathcal{E}}^{\text{H}}_{\max}$ and $\tilde{\mathcal{E}}^{\text{nH}}_{\max}$ decreases slowly with  \(N\) as compared to $\tilde{\mathcal{E}}^{\text{H}}_{\max}$ for all \(\delta\) values considered. All the axes are dimensionless.}
    \label{fig:scaling_st}
\end{figure}

{\it Observation 8: Non-Hermitian vs Hermitian in stationary states.} To investigate the role of the non-Hermiticity in the stationary framework, we again quantify the maximum entanglement between the links by 
\begin{equation}
    \tilde{\mathcal{E}}_{\max}^{\text{X}} = \max_{h,\lambda} \tilde{\mathcal{E}}_{\ell_1 \ell_2}, \quad \text{X=nH,H}
    \label{eq:max_stationary}
\end{equation}
where maximization is over \((0\!<\!h\!\leq\! 0.1, 0\!<\!\lambda\!\leq\!0.3)\) parameter space. For a given $\delta$ and fixed system-size $N$, the maximum entanglement between the links is obtained for a given value of $(h, \lambda)$ parameters, which decreases with the system-size.
While the ground state of the Hermitian model contain high LDE, the maximum LDE in the stationary states of non-Hermitian model is higher, for example, for $N\!=\!10$ and $\delta\!=\!1.0$, \(\tilde{\mathcal{E}}_{\max}^{H}=0.937<\tilde{\mathcal{E}}_{\max}^{nH}=0.993\). 
Such a benefit of non-Hermiticity is seen for various values of $\delta$, as shown in Fig.~\ref{fig:scaling_st}. It clearly depicts
\begin{equation}
    \tilde{\mathcal{E}}_{\max}^{nH}>\tilde{\mathcal{E}}_{\max}^{H},
\end{equation}
in the stationary framework (comparing solid and dashed lines in Fig. ~\ref{fig:scaling_st}).
Interestingly, while the maximum LDE $\tilde{\mathcal{E}}_{\max}^{\text{X}}$ decreases with increasing system-size $N$, the decrease is much smaller for the non-Hermitian case than the Hermitian one (see Fig.~\ref{fig:scaling_st}).

Further, the rate of decay in the stationary (ground) state of the non-Hermitian (Hermitian) model with system-size is smaller with the increase of $\delta$ $(N\leq 10)$, thereby indicating higher robustness of LDE with higher $\delta$.

Our analysis, both in stationary and dynamical situations, illustrates that the links interacting with the bulk and bath, along with weak continuous measurements on the auxiliary qubits, lead to high link entanglement creation as compared to its Hermitian counterpart. 





\section{Conclusion}
\label{sec:conclu}

Highly entangled states shared over large distances serve as essential resources for scalable quantum networks and various quantum information processing tasks. In this context,  interacting spin systems can serve as quantum buses, through which entangling quantum links weakly coupled to their boundaries are created, thereby providing a mechanism for creating long-distance entanglement. In realistic scenarios, quantum systems inevitably interact with their environments; however, continuous monitoring of this bath can effectively induce non-unitary dynamics governed by an effective non-Hermitian Hamiltonian, expanding the horizons for long-distance entanglement (LDE) generation in open quantum systems.

Summarizing, our work explored the role of a parity-time ($\hat{\mathcal{P}}\hat{\mathcal{T}}$)-symmetric non-Hermitian Su-Schrieffer-Heeger (SSH) Hamiltonian in the presence of staggered magnetic fields for creating almost frozen LDE between spatially separated quantum links over extended periods of time. Focusing on the $XX$ limit of the SSH model, we analytically derived the exceptional points from the resulting bulk-mediated interactions between the links. We demonstrated that the dynamical states near these exceptional points yield almost unit-entangled quantum links. By exploiting the dynamics in the broken phase, these highly entangled states can be dynamically reached from initially product states, near the exceptional points across different classes of the SSH models. Unlike purely unitary dynamics, where the produced entanglement inherently oscillates over time, the non-Hermitian dynamics are capable of creating robust average entanglement, independent of the initial states. Although high LDE can be achieved for moderate system sizes in the broken-symmetry regime near the exceptional points, the entanglement generally decreases more rapidly with increasing system size for SSH models beyond the (XX) limit. Furthermore, we observed that the applicability of the protocol was not limited to nearest-neighbor interactions, as it consistently generated highly entangled links even in the presence of long-range interactions.


We examined the LDE between the links in the stationary state of the system, corresponding to the lowest energy eigenvalue in the unbroken regime, and to the highest imaginary part of the eigenvalue in the broken regime. We found that for small link-interaction strengths, the stationary states in the unbroken regime possess high LDE, providing an alternative to the dynamical generation of highly entangled links.

Our results reveal a novel route for generating highly entangled quantum links via a non-Hermitian interacting bulk, highlighting the pivotal role of exceptional points in securing robust LDE between spatially separated quantum systems. More broadly, these findings illustrate the potential benefits of reservoir engineering and continuously monitored environments for the preparation of persistent entangled links, opening new avenues for the realization of quantum networks and for extending these ideas in other non-Hermitian architectures.


\acknowledgements

We  acknowledge the use of cluster computing facility at the Harish-Chandra Research Institute. This research was supported in part by the INFOSYS scholarship for senior students. We acknowledge support from the project entitled ``Technology Vertical - Quantum Communication'' under the National Quantum Mission of the Department of Science and Technology (DST)  (Sanction Order No. DST/QTC/NQM/QComm/$2024/2$ (G)).

\appendix

\section{Effective non-Hermitian Hamiltonian}
\label{app:nH_deriv}

Let us consider a spin chain composed of \(N\) spin-\(1/2\) parties, interacting via the Hamiltonian \(\hat{\mathcal{H}}\). For an isolated (closed) system, the evolution is unitary and is entirely governed by the Hermitian Hamiltonian \(\hat{\mathcal{H}}\). In contrast, when each spin is coupled to its own local bath and the initial system–bath state is assumed to be a product state \(\rho(0) = \rho_s(0)\otimes\rho_B(0)\), the dynamics becomes non-unitary and is described by the Gorini–Kossakowski–Lindblad–Sudarshan (GKLS) master equation~\cite{open_quan_book, lidar_2020_lecture}, given by
\begin{eqnarray}
\nonumber {\frac{d\rho_s}{dt}} &=& -i [\hat{\mathcal{H}}, \rho_s] + \sum_{k = 1}^{N} \gamma_k \mathcal{D}[L_{k}]\left( \rho_s\right),\\
\nonumber &=& -i \left(\hat{\mathcal{H}} \rho_s - \rho_s\hat{\mathcal{H}} \right) - \sum_k \frac{\gamma_k}{2}\left(L_k^{\dagger}L_k\rho_s + \rho_sL_k^{\dagger}L_k\right) \\ &&+ \sum_k\gamma_k L_k \rho_s L_k^{\dagger}
\label{eq:mastereqn}
\end{eqnarray}
where \(L_k\) are the site-dependent Lindblad operators at site \(k\) and \(\gamma_k\) are the corresponding dissipation rates for the dissipator $\mathcal{D}[A](\rho) = A\rho A^{\dagger}-\frac{1}{2}\left(A^{\dagger}A\rho + \rho A^{\dagger}A\right)$. Continuously monitoring the baths leads to unraveling of quantum trajectories of the GKLS equation, with \(L_k \rho_s L_k^{\dagger}\) corresponding to quantum jumps~\cite{open_quan_book}. Under post selection of a quantum trajectory, with no jumps, the jump operator term in Eq.~(\ref{eq:mastereqn}) vanishes and the effective dynamics is given by
\begin{eqnarray}
\nonumber {\dot\rho_s(t)} &=& -i [\hat{\mathcal{H}}_{\text{eff}} (t), \rho(t)], \quad
\\ \hat{\mathcal{H}}_{\text{eff}} &=& \hat{\mathcal{H}} - i\sum_{k=1}^{N}\frac{\gamma_k}{2}L_k^{\dagger}L_k 
\end{eqnarray}
where \(\hat{\mathcal{H}}_{\text{eff}}\) is the effective non-Hermitian Hamiltonian that provides a closed-system non-Hermitian representation of the open-system dynamics generated by the Hermitian model~\cite{Minganti2020, turkeshi_prb_2021, turkeshi_prb_2023b}.

In the case of dephasing, with site-dependent Lindblad operators, \(L_k = \frac{1 + (-1)^k \hat{\sigma}_k^z}{2}\) at \(k\)-th site, the effective non-Hermitian Hamiltonian \(\hat{\mathcal{H}}_{\text{eff}}\) is given (upto constant shift with $i\sum_{k=1}^{N}\gamma_k/2$) by
\begin{eqnarray}
\hat{\mathcal{H}}_{\text{eff}} &=& \hat{\mathcal{H}} - i\sum_{k=1}^{N}\frac{\gamma_k}{2}(-1)^k\hat{\sigma}_k^z,
\end{eqnarray}
resulting in alternating magnetic fields. For a 
$\hat{\mathcal{P}}\hat{\mathcal{T}}$-symmetric system $\hat{\mathcal{H}}$, the non-Hermitian Hamiltonian $\hat{\mathcal{H}}_{\text{eff}}$ is also $\hat{\mathcal{P}}\hat{\mathcal{T}}$-symmetric for even $N$.

In our work, we study the role of non-Hermiticity in generation long-distance entanglement between two link spins, \((\ell_1,\ell_2)\) connected by a chain of \(N-2\) intermediate bulk spins, with interactions given by the Hamiltonian \(\mathcal{H}\equiv\hat{\mathcal{H}}_{B\ell}\). Hence, the non-Hermitian effective Hamiltonian \(\hat{\mathcal{H}}_{\text{eff}}\equiv\hat{\mathcal{H}}_{\text{nH}}\) represents the presence of imaginary alternating transverse magnetic field with \(\gamma_k/2 \equiv h'\) as given in Eq.~(\ref{eq:nhH_def}) of the main text.

\section{Logarithmic Negativity}
\label{app:log_neg}

In our work, we aim to compare the link-to-link entangling capabilities of the Hermitian and the non-Hermitian models by incorporating the logarithmic negativity of entanglement as the figure of merit
In order to study the entanglement profile of the links (\(\ell_1,\ell_2\)), we employ the logarithmic negativity \cite{vidal_pra_2002,plenio_prl} as the measure of entanglement, which is defined as
\begin{equation}
   \mathcal{E}_{\ell_1\ell_2} \equiv \mathcal{E}(\rho_{(\ell_1\ell_2)}) \!=\! \log_2[2\mathcal{N}_{(\ell_1\ell_2)}+1], 
\end{equation}
where \(\rho_{(\ell_1\ell_2)}\) is the state of the links (\(\ell_1,\ell_2\)) after tracing out the (\(N-2\)) bulk spins and $\mathcal{N}_{(\ell_1\ell_2)}$ denotes the negativity given by the absolute sum of negative eigenvalues of the partially transposed state over either of (\(\ell_1,\ell_2\)) as $\rho^{T_{\ell_2}}_{(\ell_1\ell_2)}$~\cite{peres_prl_1996, horodecki_pla_1996}.  The logarithmic negativity $\mathcal{E}=0$ provides only a necessary condition of separability for $d_{\ell_1} d_{\ell_2}>6$~\cite{PPTboundPRL, NPTbound00, NPTboundbruss} with $d_{\ell} (= 2 s_{\ell} +1)$ being the dimension of the Hilbert space of individual link spins. Therefore, for the qubits ($d_{\ell_1}=d_{\ell_2}=2$) considered in this work, the logarithmic negativity $\mathcal{E}=0$ is both necessary and sufficient condition, providing a valid measure of entanglement.

Given an arbitrary pure state $\ket{\Psi}\!=\!\sum_{k=1}^{N} c_k\ket{\underline{k}}$ in the $n^{(\uparrow)} \!=\! 1$ sector ($\{\ket{\underline{k}}=\hat{\sigma}^x_k \ket{\downarrow}^{\otimes N}\}_{k=1}^{N}$ are the basis), the entanglement between the links $(\ell_1, \ell_2) \!\equiv\! (1,N)$ is given by
\begin{equation}
    \mathcal{E}_{\ell_1\ell_2} = \log_2\left( \sqrt{\eta_B^2 \!+\! 4|c_1|^2|c_N|^2} + |c_1|^2+|c_N|^2\right),
    \label{eq:ent}
\end{equation}
where $\eta_B=\sum_{k=2}^{N-1}|c_k|^2=1\!-\!|c_1|^2\!-\!|c_N|^2$ is the probability of $\ket{\uparrow}$ in the bulk.

Tracing out the bulk sites, the reduced state \(\rho_{1.N}\) and the partially transposed state \(\rho^{T_{\ell_2}}_{1,N}\) of the links $(\ell_1, \ell_2)$ are given by
\begin{align}
    \rho_{1,N} \!=\!\! \left[\begin{smallmatrix} \eta_B & 0 & 0 & 0 \\ 0 & |c_N|^2 & c_N c_1^* & 0 \\ 0 & c_1 c_N^* & |c_1|^2 & 0 \\ 0 & 0 & 0 & 0 \end{smallmatrix}\right], \ 
    \rho^{T_{\ell_2}}_{1,N} \!=\!\! \left[\begin{smallmatrix} \eta_B & 0 & 0 & c_N c_1^* \\ 0 & |c_N|^2 & 0 & 0 \\ 0 & 0 & |c_1|^2 & 0 \\ c_1 c_N^* & 0 & 0 & 0 \end{smallmatrix}\right]
    \label{eq:mats}
\end{align}
which are in the computational basis $\{\ket{\downarrow\downarrow}, \ket{\downarrow\uparrow}, \ket{\uparrow\downarrow}, \ket{\uparrow\uparrow}\}$. Since $|c_1|^2,|c_N|^2>0$, the negativity $\mathcal{N}_{(1,N)} = \frac{1}{2}\left(\sqrt{\eta_B^2 \!+\! 4|c_1|^2|c_N|^2}-\eta_B\right)$ gives the required measure of entanglement. Using the condition $|c_1|^2+|c_N|^2\leq 1$, $\mathcal{E}_{\ell_1\ell_2}=1$ iff $|c_1|^2=|c_N|^2=\frac{1}{2}$, i.e., when $\rho_{1,N}=\ket{\Phi^{(\tau)}}\bra{\Phi^{(\tau)}}_{1,N}$ with $\ket{\Phi^{(\tau)}}_{1,N}=\frac{1}{\sqrt{2}}\left(\ket{\uparrow\downarrow}_{1,N}+e^{i\tau}\ket{\downarrow\uparrow}_{1,N}\right)$.



\section{Entanglement generation for the $XX$ model without magnetic fields}
\label{app:xx_ent}
In this section, we compare the entanglement generation from the product state, with the distribution of entanglement with the quantum state transfer for the the $XX$ model, i.e., $\delta=\lambda=1, h=0$ in Eq.~(\ref{eq:nhH_def}) of the main text. We show that creating entanglement between the links $(\ell_1, \ell_2)$ at ends $(1,N)$ of the chain from a product state is more stringent than the distribution of the entanglement across the chain. 

In the basis $\{\ket{\underline{k}}\}_{k=1}^{N}$ for $N$ sites, the Hamiltonian of the system $\hat{\mathcal{H}}_{B\ell}(\delta \!=\!\lambda \!=\! 1, h \!=\! 0)=\hat{\mathcal{H}}_{XX}$, the eigenvalue expression and the eigenvectors after its diagonalization are given by~\cite{Leib_1961, Bose_2003}
\begin{align}
    &\hat{\mathcal{H}}_{XX} \!=\! 2\sum_{k=1}^{N-1}\ket{\underline{k}}\!\bra{\underline{k\!+\!1}} + \ket{\underline{k\!+\!1}}\!\bra{\underline{k}} = \sum_{\mathclap{a=1}}^{N} E_a \!|\tilde{a}\rangle\!\langle\tilde{a}|; \\
    &E_a \!=\! 4\cos \!\left(\!\frac{a\pi}{N\!+\!1}\!\right); \quad  |\tilde{a}\rangle \!=\! \sqrt{\!\frac{2}{N\!+\!1}}\sum_{\mathclap{b=1}}^{N} \! \sin \!\left(\!\frac{ab\pi}{N\!+\!1}\!\right) \! \ket{\underline{b}}. \nonumber
\end{align}
Therefore, starting with an initial state $\ket{\Psi(0)} = \ket{\underline{1}}$, the maximum value $|c_N(t)|=|\langle\underline{N}|e^{-i\hat{\mathcal{H}}_{XX}t}|\underline{1}\rangle|$ (see Eq. \ref{eq:mats}) scales as $|c_N(t^*_N)|\sim N^{-1/3}$ (see Eq. (12) of~\cite{Bose_2003}), at optimal time $t^*_N$, with system-size $N$. As the high entanglement $\mathcal{E}_{\ell_1 \ell_2}\approx 1$ between the links $(\ell_1, \ell_2)$ requires $|c_1|^2\approx |c_N|^2 \approx \frac{1}{2}$, the links cannot be maximally entangled via $XX$ chain, and
\begin{equation}
    \max_t\mathcal{E}_{\ell_1 \ell_2}(t)\sim \frac{3|c_1(t^*_N)|^2}{\log 2}N^{-2/3} \approx4.328|c_1(t^*_N)|^2N^{-2/3}
\end{equation}
for large $N$ and strong interaction strength $\lambda=1$. Note that, $|c_1(t^*_N)|$ can also be \(N\) dependent, which can also decrease with system-size. Therefore, the maximum entanglement between the links decreases worse than $N^{-2/3}$, whereas Ref.~\cite{Bose_2003} shows that the maximum entanglement in quantum state transfer decreases as $N^{-1/3}$. Therefore, creating entanglement at large distances from a product state can be more demanding than transferring a part of an entangled state.

\section{Resultant Hamiltonian of the links for the $XX$ bulk with small link strengths and magnetic fields}
\label{app:xx_sw}

The \(\hat{\mathcal{P}}\hat{\mathcal{T}}\)-symmetric non-Hermitian SSH Hamiltonian $\hat{\mathcal{H}}_{\text{nH}}$ in Eq.~(\ref{eq:nhH_def}) of the main text can be solved perturbatively for $\delta=1$, i.e., the bulk with $XX$ interactions, for weak link strengths and small magnetic fields $\lambda,h\ll 1$. The eigenspectrum of the unperturbed $\hat{H}_B\equiv\hat{H}_{XX}$ and the perturbation $\hat{V}_{\ell}$, given in the Eqs.~(\ref{eq:xx_spec}) and~(\ref{eq:vw}) of the main text are,
\begin{align}
    &\hat{\mathcal{H}}_{\text{nH}} (\delta\!=\!1) = \hat{H}_{XX} \!+\! \lambda\hat{V}_{\ell} \!+\! ih\hat{W}; \  \nonumber\\
    &\hat{H}_{XX} \!=\! \sum_{a=2}^{N-1} \!E_{a}\ket{\tilde{a}}\!\!\bra{\tilde{a}};\quad E_{a} \!=\! 4\cos\theta_a, \quad \theta_a \!\equiv\! \frac{(a\!-\!1)\pi}{N\!-\!1}, \nonumber\\
    &\ket{\tilde{a}} = \sqrt{\frac{2}{N\!-\!1}}\sum_{b=2}^{N-1} 
    \sin\left[(b\!-\!1)\theta_a\right] |\underline{b}\rangle,\quad (2 \!\leq\! a \!\leq\! N\!-\!1) \nonumber \\
    & E_1 \!=\! E_N \!=\! 0, \quad|\tilde{1}\rangle \!=\! |\underline{1}\rangle, |\tilde{N}\rangle \!=\! \ket{\underline{N}}, \quad P_0 \!=\! |\tilde{1}\rangle\!\langle\tilde{1}| \!+\! |\tilde{N}\rangle\!\langle\tilde{N}| \nonumber \\
    &\hat{V}_{\ell} =  \sum_{a=2}^{N-1} 2v_a \Big[ |\tilde{1}\rangle\!\langle\tilde{a}| + |\tilde{a}\rangle\!\langle\tilde{1}| + (-1)^{a}(|\tilde{N}\rangle\!\langle\tilde{a}| + |\tilde{a}\rangle\!\langle\tilde{N}|) \Big], \nonumber\\
    &\hat{W} = 2\left(|\tilde{1}\rangle\langle\tilde{1}| - |\tilde{N}\rangle\langle\tilde{N}|+\sum_{a=2}^{N-1}\ket{-\tilde{a}}\bra{\tilde{a}}\right), 
\end{align}
in the eigenbasis of the $\hat{H}_{XX}$ with $v_a\!=\!\sqrt{\frac{2}{N-1}}\sin\theta_a$ and $-a\equiv N\!+\!1\!-\!a$. As the system-size of the bulk ($N-2$) as even, there are no zero energy states in the bulk. Therefore, $P_0$ is the zero energy subspace of interest as the links $(\ell_1, \ell_2)\equiv(1,N)$ are uncoupled from the bulk, with $P_B \!=\! \mathbb{I} \!-\! P_0$ as the projector on the bulk states. Hence, only $\hat{V}_{\ell}$ couples $P_0$ and $P_B$ and gives the off-diagonal terms in the $P_0$ subspace, while $\hat{H}_{B}$ and $\hat{W}$ does not, i.e.,
\begin{equation}
    \hat{V}_{\ell} \!=\! P_0 \hat{V}_{\ell} P_B \!+\! P_B \hat{V}_{\ell} P_0, \quad \hat{A} \!=\! P_0 \hat{A} P_0 \!+\! P_B \hat{A} P_B, \nonumber
\end{equation}
with $\hat{A} \!=\! \hat{H}_{XX} \!+\! ih\hat{W}$.

Using Schrieffer-Wolff transformation~\cite{Bravyi_2011, Starkov_2023} for $\lambda\ll 1$, the appropriate generator $\hat{S}$ gives the transformed Hamiltonian of the links as
\begin{align}
    & \hat{\mathcal{H}}_{\text{nH}}^{(S)} \!=\! e^{\hat{S}} \hat{\mathcal{H}}_{\text{nH}} e^{-\hat{S}} \!=\! \hat{\mathcal{H}}_{\text{nH}} + [\hat{S}, \hat{\mathcal{H}}_{\text{nH}}] + \frac{1}{2}[\hat{S}, [\hat{S}, \hat{\mathcal{H}}_{\text{nH}}]] + O(\hat{S}^3), \nonumber \\
    & \hat{H}_{\ell_1 \ell_2}^{\text{SW}} = P_0 \hat{\mathcal{H}}_{\text{nH}}^{(S)} P_0^{\dagger}. \nonumber
\end{align}
Since, $\hat{W}$ is diagonal in the subspace of interest i.e., $P_0$, there are no off-diagonal terms to vanish and $\hat{S}\propto \lambda$. Hence, keeping $[S, \hat{H}_{XX}] = -\lambda\hat{V}_{\ell}$ gives
\begin{equation}
    \hat{\mathcal{H}}_{\text{nH}}^{(S)} = \hat{H}_{XX} + ih\hat{W} + ih[\hat{S}, \hat{W}] + \frac{\lambda}{2}[\hat{S}, \hat{V}_{\ell}] +O(\lambda^3), \nonumber
\end{equation}
and the first-order terms of $\lambda$ vanishes ($\hat{S}\propto \lambda$) in the off-diagonal in $P_0$ subspace. The generator can be expressed in the eigenbasis $\{\tilde{k}\}_{k=1}^N$ of $\hat{H}_{XX}$ from the elements of $\hat{V}_{\ell}$ in Eq.~(\ref{eq:vw}).
\begin{widetext}
\begin{align}
    &\langle\tilde{a}|[\hat{S}, \hat{H}_{XX}]|\tilde{b}\rangle \!=\! \langle\tilde{a}|\hat{S}|\tilde{b}\rangle E_b \!-\! E_{a} \langle\tilde{a}|\hat{S}|\tilde{b}\rangle \!=\! -\lambda \langle\tilde{a}|\hat{V}_{\ell}|\tilde{b}\rangle \implies \hat{S} \!=\! \frac{\lambda}{2} \sum_{\mathclap{a=2}}^{\mathclap{N-1}} \frac{v_a}{\cos\theta_a} \left[ |\tilde{a}\rangle\!\langle\tilde{1}| \!-\! |\tilde{1}\rangle\!\langle\tilde{a}| \!+\! (-1)^a\left(|\tilde{a}\rangle\!\langle\tilde{N}| \!-\! |\tilde{N}\rangle\!\langle\tilde{a}| \right)\right], \nonumber\\
    & \langle\tilde{1}|[\hat{S}, \hat{V}_{\ell}]|\tilde{1}\rangle \!=\! \langle\tilde{N}|[\hat{S}, \hat{V}_{\ell}]|\tilde{N}\rangle \!=\! \frac{-4\lambda}{N\!-\!1} \!\sum_{\mathclap{a=2}}^{\mathclap{N-1}}\frac{\sin^2\theta_a}{\cos\theta_a} \!=\! 0, \  \nonumber \langle\tilde{1}|[\hat{S}, \hat{V}_{\ell}]|\tilde{N}\rangle \!=\! \langle\tilde{N}|[\hat{S}, \hat{V}_{\ell}]|\tilde{1}\rangle = \frac{-4\lambda}{N\!-\!1} \!\sum_{\mathclap{a=2}}^{\mathclap{N-1}}(-1)^a \frac{\sin^2\theta_a}{\cos\theta_a}.
\end{align}
\end{widetext}
Similarly $P_0[\hat{S}, \hat{\tilde{W}}]P_0^{\dagger} \!=\! 0$. Hence, in the original basis $|\tilde{1}\rangle \!=\! |\underline{1}\rangle, |\tilde{N}\rangle \!=\! \ket{\underline{N}}$
\begin{equation}
    \hat{H}_{\ell_1 \ell_2}^{\text{SW}} \!=\!  
    \begin{pmatrix} 
        2ih & \frac{-2\lambda^2}{N\!-\!1}\chi_{\text{e}} \\ 
        \frac{-2\lambda^2}{N\!-\!1}\chi_{\text{e}} & -2ih 
    \end{pmatrix}, \quad  \chi_{\text{e}} = \sum_{\mathclap{a=2}}^{\mathclap{N-1}} (-1)^{a} \frac{\sin^2\theta_a}{\cos \theta_a}, \nonumber
\end{equation}
is the resulting interaction between the links with $\lambda\ll1$. The value of $\chi_{\text{e}}=-(-1)^{N/2}(N\!-\!1)$ is calculated and validated numerically, which results in Eq.~(\ref{eq:H_SW}) of the main text. Note that the resultant Hamiltonian for the bulk states, i.e., in $P_B$ subspace will contain the $h^2$ and the $h\lambda$ terms as well, as both $h$ and $\lambda$ are in the off-diagonals. The eigenvalues of the bulk states are computed by non-degenerate perturbation theory.

\begin{figure}
    \centering
\includegraphics[width=1.0\linewidth]{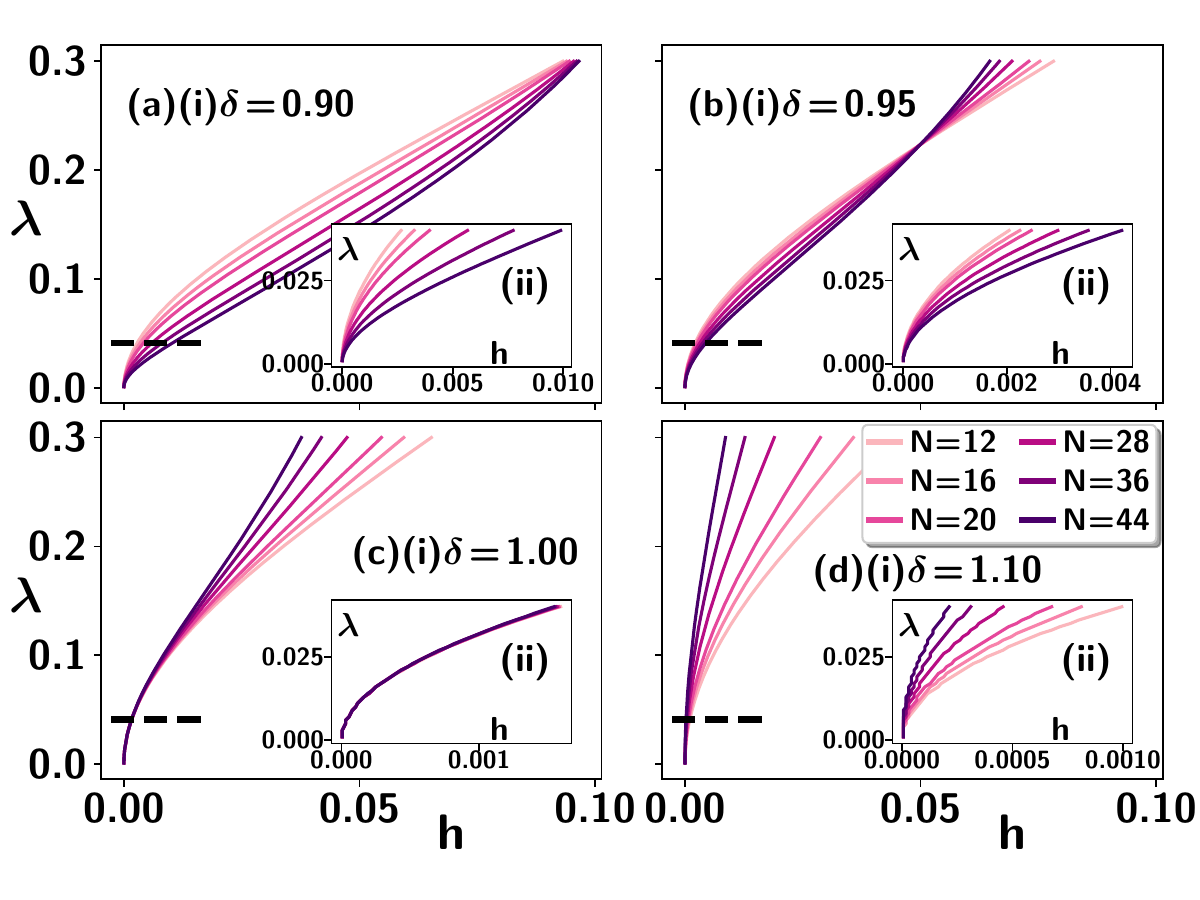}
    \caption{ {\bf Exceptional points} of the non-Hermitian SSH model with link-interaction strength $\lambda$ (ordinate) and staggered magnetic field $h$ (abscissa) for different system-sizes $N$ and (a)-(d) different intracell hopping $\delta$ of the bulk. The black dashed line denotes $\lambda_c=0.04$, where each inset show the exceptional points for $\lambda<\lambda_c$, with weak link-interaction strengths. All the axes are dimensionless.}
    \label{fig:ssh_eps}
\end{figure}

\section{Exceptional points in the SSH model for weak link-interaction strengths}
\label{app:ssh_ep}

\begin{figure}
\includegraphics[width=1.0\linewidth]{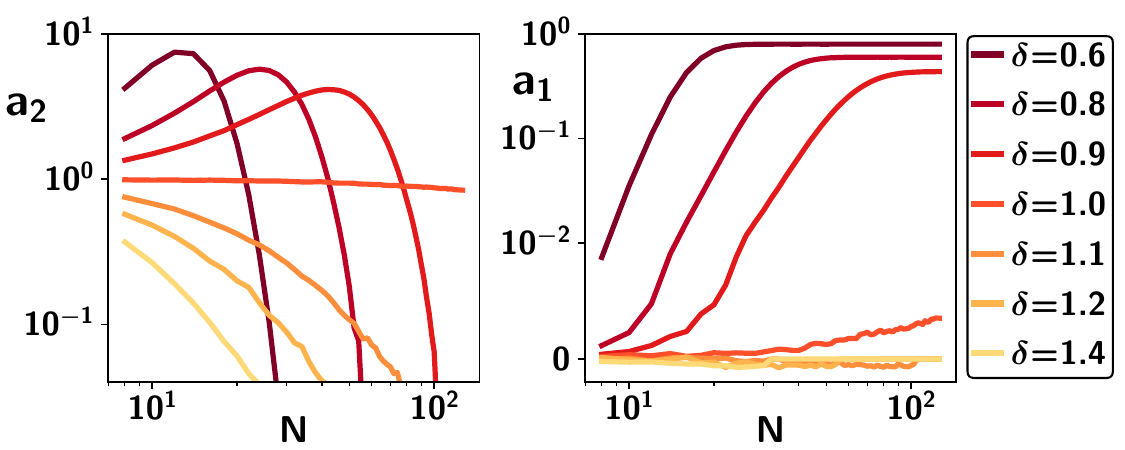}
    \caption{ The coefficients of exceptional points of the non-Hermitian SSH model with link-interaction strength with the system-size $N$ (abscissa). The exceptional points are $h_e=a_2\lambda^2 +a_1\lambda$ with (a) $a_2$ and (b) $a_1$ on the ordinates. Different lines denote different intracell hopping $\delta$ of the bulk. All the axes are dimensionless.}
    \label{fig:ssh_eps_coeffs}
\end{figure}

The exceptional points of the $\hat{\mathcal{P}}\hat{\mathcal{T}}$-symmetric non-Hermitian SSH model $\hat{\mathcal{H}}_{\text{nH}}$ (see Eq.~\ref{eq:nhH_def} of the main text) with weak link strengths $\lambda\ll1$ are determined by numerically diagonalizing the Hamiltonian $\hat{\mathcal{H}}_{\text{nH}}$ in the $n^{(\uparrow)} \!=\! 1$ sector. From the eigenvalues $\{E_k\}_{k=1}^{N}$, we compute the maximum imaginary component of the eigenvalue as $e_i=\max_{k=1}^N \text{imag}(E_k)$, where $\text{imag}(E_k)$ denotes the imaginary component of the eigenvalue $E_k$. For a given $\delta$, system-size $N$ and link-interaction strength $\lambda$, we numerically find $h_e$ such that $e_i(h_{e}\!-\!dh/2)>\epsilon$ and $e_i(h_{e}\!+\!dh/2)<\epsilon$, for small $\epsilon=10^{-9}$ and $dh=10^{-4}$.

As illustrated in Fig.~\ref{fig:ssh_eps}, the exceptional points $h_e(\lambda)$ vary with the system size $N$ and the intracell interaction $\delta$ of the SSH bulk. For weak link-interaction strengths $\lambda < \lambda_c = 0.04 \ll 1$, we obtain $h_e = \lambda^2$ for $\delta = 1$ across all system sizes $N$, a result analytically demonstrated in Theorem~\ref{thm:xx_ep} of the main text. For $\delta \neq 1$, the exceptional lines within the regime $\lambda < \lambda_c$ are numerically fitted using the quadratic expression $h_e = a_2\lambda^2 + a_1\lambda + a_0$. While the intercept $a_0 \sim 10^{-6}$ remains consistently small for all considered values of $\delta$ and $N$, the coefficients $a_2$ and $a_1$ depend on both the system size and $\delta$, as depicted in Fig.~\ref{fig:ssh_eps_coeffs}. 

Specifically, for $\delta > 1$, $a_2$ decreases with increasing system size $N$ while $a_1 \sim 10^{-4}$. This indicates for large system-sizes, the non-Hermitian SSH model with $\delta > 1$ and weak link interactions $\lambda \ll 1$ resides in the broken phase for any $h > 0^+$. Conversely, for $\delta < 1$, $a_2$ initially increases with $N$ before decreasing for sufficiently large system sizes. Meanwhile, the coefficient $a_1$ increases with $N$ and saturates to a fixed value in the large-$N$ limit. Consequently, for $\delta < 1$ and large system sizes, the exceptional points of the non-Hermitian SSH model with weak link interactions $\lambda \ll 1$ scale linearly as $h_e \approx a_1\lambda$.

\bibliography{refbulk.bib}
\end{document}